\theoremstyle{plain}
\newtheorem{thm}{\protect\theoremname}[section]
\newtheorem{cor}[thm]{\protect\corollaryname}
\newtheorem{lem}[thm]{\protect\lemmaname}
\theoremstyle{definition}
\newtheorem{defn}[thm]{\protect\definitionname}
\def\01{\{0,1\}}
\newcommand{\floor}[1]{\lfloor{#1}\rfloor}
\newcommand{\ket}[1]{|#1\rangle}
\newcommand{\bra}[1]{\langle#1|}
\newcommand{\set}[1]{\{#1\}}
\newcommand{\Tr}{\mbox{\rm Tr}}
\newcommand{\event}{\mathcal{E}}
\newcommand{\ns}{\mathrm{ns}}
  \providecommand{\definitionname}{Definition}
  \providecommand{\lemmaname}{Lemma}
  \providecommand{\propositionname}{Proposition}
  \providecommand{\theoremname}{Theorem}
  \providecommand{\corollaryname}{Corollary}
  \providecommand{\conjecturename}{Conjecture}
  \providecommand{\observationname}{Observation}
\newcommand{\qindep}{\alpha^{*}} 
\newcommand{\qchrom}{\chi^{*}} 
\begin{document}

\title{Multi-party zero-error classical channel coding with entanglement}

\author{Teresa Piovesan\thanks{Centrum Wiskunde \& Informatica (CWI),  Amsterdam, The Netherlands.}, Giannicola Scarpa\thanks{Universitat Autonoma de Barcelona, Spain. }, \\ and Christian Schaffner\thanks{Institute for Logic, Language and Computation (ILLC), University of Amsterdam, The Netherlands and Centrum Wiskunde \& Informatica (CWI),  Amsterdam, The Netherlands. }}

\maketitle

\begin{abstract}
We study the effects of quantum entanglement on the performance of two classical zero-error communication
tasks among multiple parties. 
Both tasks are generalizations of the two-party zero-error channel-coding problem, where a sender and a receiver want to perfectly communicate messages through a one-way classical noisy channel.
If the two parties are allowed to share entanglement, there are several positive results that show the existence of channels for which they can communicate strictly more than what they could do with classical resources.

In the first task, one sender wants to communicate a common message to multiple receivers.
We show that if the number of receivers is greater than a certain threshold then entanglement does not allow for an improvement in the communication for any finite number of uses of the channel.
On the other hand, when the number of receivers is fixed, we exhibit a class of channels for which entanglement gives an advantage.

The second problem we consider features multiple collaborating senders and one receiver. 
Classically, cooperation among the senders might allow them to communicate on average more messages than the sum of their individual possibilities. 
We show that whenever a channel allows single-sender entanglement-assisted advantage, then the gain extends also to the multi-sender case. 
Furthermore, we show that entanglement allows for a peculiar  amplification of information which cannot happen classically,  
for a fixed number of uses of a channel with multiple senders.
\end{abstract}

\section{Introduction}

Suppose Alice wants to send a message to Bob but they can communicate only through a one-way classical noisy channel. How much information can she send to him on average, such that Bob learns Alice's message with zero probability of error? 
This is known as the zero-error channel-coding problem.
Since its introduction by Shannon \cite{Shannon:1956}, several generalizations to multi-party settings have been proposed and a large research area in the intersection of information theory and combinatorics has been developed (see \cite{Korner} for a survey). 

Recently, Cubitt et al.~\cite{CLMW09} introduced the entanglement-assisted version of the two-party problem: How much data can Alice send to Bob with zero error when they are connected through a one-way classical noisy channel and share an entangled quantum state?
Does entanglement allow to transmit more information? 
An affirmative answer for the latter question was given in \cite{LMMOR} and \cite{Briet:2012} where are shown channels exhibiting a separation between the entanglement-assisted and classical communication.
Two parties can communicate strictly more if they share a certain entangled state than what they could do in the classical case.
Note that this separation holds only for the special case where we want the communication to succeed with zero error.
As shown in \cite[Theorem 1]{Bennett:2002}, sharing entanglement does not provide any advantage in the more general case of vanishing error probability, where we ask the probability of error to asymptotically go to zero as the number of uses of the channel goes to infinity.

Inspired by these results, we continue this line of research by investigating the effect of quantum entanglement in two multi-party zero-error communication scenarios.

In the first situation we study, one sender is connected through identical classical channels to multiple receivers. This is known as a \emph{compound channel}.
We prove that, for any fixed number of uses of the channel, entanglement might improve the communication only up to a certain number of receivers  (Theorem~\ref{thm:qmonogamy}). The effect is due to the monogamy of entanglement (or more generally, of non-signaling correlations~\cite{MAG06}). Indeed, if the number of receivers is greater than some threshold (which depends uniquely on the channel and the number of channel uses), we show that entanglement does not help. However, for any constant number of receivers, we can build a compound channel (based on a family studied in \cite{Briet:2012}) for which there is a separation between the entanglement-assisted and classical setting (Corollary~\ref{cor:compound}).

In the second problem, multiple senders cooperate to communicate with a single receiver through identical classical channels.
Our Theorem~\ref{thm:sep_c_l} shows that there exist channels for which entanglement increases the amount of communication that can be sent, independently from the number of senders.
More surprisingly, there are channels for which entanglement allows a joint strategy among the senders that is strictly better than the sum of the individual strategies (Theorem~\ref{thm:disjointalpha}). 
If each sender is able to transmit $m$ messages using entanglement, there is a joint entanglement-assisted strategy for $\ell$ senders that allows to communicate strictly more than $\ell \cdot m$ messages. 
This is in contrast with the classical case where this phenomenon cannon happen.

The rest of the paper is organized as follows. 
In Section \ref{sec:prel} we introduce the basic notation and the two-party problem.
In Section \ref{sec:rec} we study the entanglement-assisted one-sender and multi-receiver situation. 
In Section \ref{sec:send} we present the effect of entanglement when there are multiple cooperating senders and one receiver.
Section~\ref{sec:con} contains the conclusions and some open questions.

\section{Preliminaries}\label{sec:prel}

\subsection{Notation and basics of graph theory}\label{sec:not}

We denote with $[n]$ the set $\{1,\dots,n\}$, with $\Pi(n)$ the symmetric group over $[n]$, with $\delta_{i,j}$ the Kronecker delta function ($\delta_{i,j}=0$ if $i\neq j$ and $\delta_{i,j}=1$ if $i=j$) and with $I$  the identity matrix. 
Consider a positive semidefinite operator $\rho$ that acts on a bipartite finite-dimensional Hilbert space $\mathcal{H}_A \otimes \mathcal{H}_B$. We denote with $\Tr_A(\rho)$ the partial trace of $\rho$ over the subspace $\mathcal{H}_A$.
Moreover suppose $\rho$ acts on a finite-dimensional $\ell$-partite Hilbert space $\mathcal{H}^{\otimes \ell}$
which we denote as $B_1 \otimes B_2 \otimes \dots \otimes B_\ell$. With $\Tr_{B_{-k}}(\rho)$ we denote the partial trace of $\rho$ over all the subspaces but the $k$-th one, \emph{i.e.}, $\Tr_{B_{-k}}(\rho) = \Tr_{B_1, \dots, B_{k-1}, B_{k+1}, \dots, B_\ell}(\rho)$.
Throughout the paper, the logarithms are binary and the graphs are assumed to be undirected and simple.

For any graph $G$, we denote with $V(G)$ and $E(G)$ its vertex and edge set. If two vertices $u,v \in V(G)$ are equal or adjacent we write $u \simeq v$, analogously we denote $u \sim v$ if they are distinct and adjacent.
The complement graph of $G$, denoted by $\overline G$, is the graph on the same vertex set as $G$ where two distinct vertices are adjacent if and only if they are non-adjacent in $G$.
An \emph{independent set} of $G$ is a subset of $V(G)$ that contains only pairwise non-adjacent vertices. The {\em independence number} $\alpha(G)$ is the maximum cardinality of an independent set of $G$.
A coloring is a partition of the vertex set into independent sets. The {\em chromatic number} $\chi(G)$ is the minimum cardinality of a coloring. 
A \emph{clique} is a set of pairwise adjacent vertices. 
The {\em edge-clique cover number} $\theta_e(G)$ is the smallest number of cliques that together cover all the edges of the graph $G$. We denote with $\theta'_e(G)$ the edge-clique cover number of $G$ plus the number of isolated vertices of $G$.
An \emph{orthogonal representation} of a graph $G$ is a map from the vertex set into the $d$-dimensional unit sphere such that adjacent vertices are mapped to orthogonal vectors. The minimum dimension $d$ for which such a representation exists is denoted as the \emph{orthogonal rank} $\xi(G)$.

The graph $K_t$ is the \emph{complete graph} on $t$ vertices, where every pair of distinct vertices is adjacent.
The {\em orthogonality graph} $\Omega_k$ has as vertex set all the vectors in $\{\pm 1\}^k$ and two vertices are adjacent if the corresponding vectors are orthogonal.

The {\em strong product graph} of $G$ and $H$ is denoted by $G \boxtimes H$. It has vertex set $V(G) \times V(H)$ (where $\times$ denotes the Cartesian product) and a pair of distinct vertices $ux, vy$ is adjacent if $u \simeq v$ in $G$ and $x \simeq y$ in $H$. 
The {\em $n$-th strong graph power} of $G$, denoted by $G^{\boxtimes n}$, is the strong product graph of $n$ copies of $G$. Its vertex set is the Cartesian product of $n$ copies of $V(G)$ and the pair of distinct vertices ${(u_1\dots, u_n),(v_1,\dots,v_n)}$ forms an edge in $G^{\boxtimes n}$ if $u_i \simeq v_i$ in $G$ for all $i \in [n]$.
We denote with $G^{+t}$ the disjoint union of $t$ copies of $G$, \emph{i.e.}, $V(G^{+t}) = V(G) \times [t]$ and the pair $ui, vj$ is adjacent if $u \sim v$ in $G$ and $i = j \in [t]$.
The {\em Cartesian product graph} of $G$ with a complete graph $K_t$, denoted by $G \square K_t$, has vertex set $V(G) \times [t]$ and the pair $ui,vj$ is adjacent if $u\sim v$ in $G$, $i=j\in [t]$ or if $u=v \in V(G)$, $i \neq j \in [t]$. 

The Lov\'asz theta number~\cite{Lovasz} of a graph $G$ is equal to 
$$\vartheta(G) = \max \sum_{u,v \in V(G)}X_{uv} \; \text{ s.t.} \sum_{u\in V(G)}X_{uu}=1, \; X_{uv} = 0 \quad \forall uv \in E(G),\; X \succeq 0;$$
where $X \succeq 0$ means that $X$ is a positive semidefinite matrix.
As $\vartheta(G)$ is the optimal value of a positive semidefinite program, it can be computed up to any approximation in polynomial time in the number of vertices $|V(G)|$.
For any graph $G$ we have $\alpha(G) \leq \vartheta(G) \leq \chi(\overline G)$.
Among the many useful properties of Lov\'asz theta number, we will use that $\vartheta(G \boxtimes H) = \vartheta(G) \vartheta(H)$ and $
\vartheta (\overline{G \boxtimes H}) = \vartheta (\overline G) \vartheta(\overline{H})$  for every pair of graphs $G$ and $H$. Moreover $\vartheta$ is monotone non-decreasing under taking subgraphs and $\vartheta(K_t) = 1$, $\vartheta(\overline{K}_t) = t$ for all $t \in \mathbb{N}$ (see \cite{Knuth:1993} for a survey on the properties of $\vartheta$).

\subsection{The zero-error channel capacity}

Imagine the following scenario: Alice wants to communicate a message to Bob but they can only use a one-way classical noisy channel $\mathcal{N}$.
The channel is fully characterized by its finite input set $V$, its finite output set $W$ and a probability distribution $\mathcal N (\cdot|v)$ for every $v \in V$, where $\mathcal N (w | v)$ is the probability of Bob receiving outcome $w \in W$ given that Alice has sent input $v \in V$. 

To communicate a message $i \in [m]$, Alice uses an encoding function $C: [m] \to V$ and sends $C(i)$ through the channel. Bob receives output $w$ with probability $\mathcal N (w|C(i))$ and uses a decoding function $D: W \to [m]$ to get a message $D(w)$. 
We consider only the zero-error scenario, where $D(w)$ must always be equal to Alice's original message $i$.
What is the maximum size $m$ of a message set that Alice can use?
As shown by Shannon~\cite{Shannon:1956}, this problem can be reformulated in graph-theoretic terms.
  
To a channel $\mathcal N$ with input set $V$ and output set $W$, we associate a {\em confusability graph} $G$, where $V(G)=V$ and the pair $u,v \in V$ forms an edge if there exists a $w\in W$ such that $\mathcal N (w|u) \mathcal N (w|v) > 0$.
In other words, $u$ and $v$ form an edge if  Bob can potentially confuse these inputs and every output of the channel can be naturally associated with a clique of $G$, a set of pairwise confusable inputs.
Since we want the communication between Alice and Bob to be zero-error, the largest size $m$ of a message set they can employ with one use of the channel is the largest set of pairwise non-confusable inputs, \emph{i.e.}, $m$ is the independence number $\alpha(G)$.
(Equivalently, Alice can communicate at most $\log \alpha(G)$ bits of information.) 
The confusability graph of $n$ channel uses is represented by the strong graph power $G^{\boxtimes n}$.
Shannon~\cite{Shannon:1956} showed that using the channel multiple times can be on average more efficient than using it only once. The {\em Shannon capacity} of a (channel with confusability) graph $G$, 
$$c(G) = \lim_{n\to\infty}\frac{1}{n}\log\alpha(G^{\boxtimes n}),$$
is the maximum average number of bits that can be communicated with zero-error.
By super-multiplicativity of $\alpha(\cdot)$ and Fekete's Lemma,\footnote{ 
Super-multiplicativity says that $\alpha(G^{\boxtimes m+m'}) \geq \alpha(G^{\boxtimes m}) \alpha(G^{\boxtimes m'})$ for all $m,m' \in \mathbb{N}$.
Fekete's Lemma says that if a sequence $(a_m)_{m\in \mathbb{N}}$ is super-additive, $a_{m+m'} \geq a_m + a_{m'}$ $\forall m,m' \in \mathbb{N}$, then the limit of the sequence $(a_m/m)_{m\in \mathbb{N}}$ exists and $\lim_{m \to \infty} a_m/m = \sup_{m} a_m/m$. 
} the Shannon capacity is well-defined and it is also equal to $c(G) = \sup_{n}\frac{1}{n}\log\alpha(G^{\boxtimes n})$. 
Computing the Shannon capacity is a notoriously hard problem and even the value of the Shannon capacity of the 5-cycle was unknown until Lov\'asz \cite{Lovasz} calculated it using the parameter $\vartheta(G)$, commonly referred to as the Lov\'asz theta number, which has the property that $c(G) \le \log \vartheta(G)$.
The computational complexity of $c(G)$ is unknown (see for example \cite{Alon:2006}).

Suppose that Alice and Bob can share an entangled state and use it as additional resource in their communication protocol.
For simplicity let us consider a single use of the channel.
To send a message $i \in [m]$ to Bob, Alice performs a measurement $ \{ A_i^u\}_{u\in V} $ (which depends on~$i$) on her part of the entangled state, and sends its outcome $u\in V$ through the channel $\mathcal{N}$.
With probability $\mathcal N(w|u)$, Bob receives $w \in W$ and uses this information to perform a measurement $ \{ B_{w}^{\hat{i}}\}_{{\hat{i}} \in [m]}$ on his side of the entangled state getting the message $\hat{i}$ as outcome. Bob outputs $\hat{i}$ and in the zero-error scenario we require $\hat{i}$ to be equal to $i$ with zero probability of error.
In other words, we want an entangled state $\ket{\psi}$ in $\mathcal{H}_A \otimes \mathcal{H}_B$ and a measurement with POVM elements $\{ A_i^u \}_{u\in V}$ for every $i\in [m]$ such that 
$$ \Tr_A ( A_i^u \otimes I_B \ket{\psi} \bra{\psi}) \ \perp \  \Tr_A ( A_j^v \otimes I_B \ket{\psi} \bra{\psi}) \text{ for all } i \neq j\in [m], u \simeq v \in V(G),
$$
where $G$ is the confusability graph of $\mathcal N$.
Bob has to be able to perfectly distinguish between distinct messages $i$ and $j$, whenever Alice channel's input might be confusable.
This entanglement-assisted variant has been introduced by Cubitt et al.~\cite{CLMW09}. Setting $\rho_i^u := \Tr_A ( A_i^u \otimes I_B \ket{\psi} \bra{\psi})$ they obtained the following concise definition of the single channel use and of the entangled Shannon capacity.

\begin{defn}[\cite{CLMW09}] \label{def:qindep}
For a graph $G$, the \emph{entanglement-assisted independence number} $\alpha^*(G)$ is the maximum $m\in \mathbb{N}$ such that there exist positive semidefinite operators $\{\rho_i^u : i\in[m],\, u\in V(G)\}$ and~$\rho$ acting on a finite-dimensional Hilbert space $\mathcal{H}$ such that
\begin{align*}
\Tr(\rho) &= 1 \, ,\\
\sum_{u\in V(G)} \rho^u_i &= \rho \quad \forall i\in[m] \, ,\\
\rho^u_i\rho^v_j &= 0 \quad \forall i\ne j \in [m],\, \forall u \simeq v \in V(G) \, .
\end{align*}

The \emph{entangled Shannon capacity} is $c^*(G) = \lim_{n\to\infty}\frac{1}{n}\log\alpha^*(G^{\boxtimes n})$ and indicates the maximum average number of bits that Alice can perfectly communicate to Bob using entanglement.
\end{defn}

Notice that we have only shown that from a protocol we can find matrices that satisfy Definition~\ref{def:qindep}. For the proof of the reverse statement, and thus of the correctness of Definition \ref{def:qindep}, we refer to \cite{CLMW09} and for more details to \cite{Briet:2013}.
Since $\alpha^*$ is super-multiplicative~\cite{CLMW09}, $c^*(G)$ is well-defined by Fekete's lemma and can be alternatively expressed as $c^*(G) = \sup_{n}\frac{1}{n}\log\alpha^*(G^{\boxtimes n})$.

The Lov\'asz theta number $\vartheta(G)$ 
is also an upper bound for the entangled Shannon capacity and hence $c(G) \le c^*(G) \leq \log \vartheta(G)$ holds.
In fact~\cite{ZEC-theta,Duan:2013} proved that $\alpha^*(G) \leq \floor{\vartheta(G)}$ and using the multiplicativity of $\vartheta(G)$ under strong graph products concluded that $c^*(G) \leq \log \vartheta(G)$.
This means that the Lov\'asz theta number cannot be used to separate the classical Shannon capacity $c(G)$ from the entangled capacity $c^*(G)$ and therefore to find channels for which entanglement improves the communication.
Despite this fact, there exist examples of graphs for which $\alpha(G) < \alpha^*(G)$ (see~\cite{CLMW09}) and $c(G) < c^*(G)$ (see~\cite{LMMOR} and~\cite{Briet:2012}).

\section{Multiple receivers}\label{sec:rec}
In this section, we introduce a first generalization of the channel-coding problem.
Suppose that there are $\ell$ receivers that want to decode a common message sent by a single sender, as for example in TV broadcasting.
This is known as the \emph{compound channel} model.
We focus on the zero-error case where each receiver can decode the message with zero probability of error. This setting was studied in \cite{Cohen:1990} as a generalization of the zero-error Shannon capacity (see \cite{Sinaimeri} for a detailed description).  
After a brief introduction of the classical scenario we analyze the entanglement-assisted variant of this problem.

Consider a family of channels $\bm{\mathcal{N}} = \{\mathcal{N}_{1},\dots,\mathcal{N}_{\ell} \}$ with the same input set $V$ where $\mathcal N_k$ connects the sender with the $k$-th receiver.
A common input $v \in V$ is sent to all the receivers and the $k$-th receiver gets the output $w_k$ according to the distribution $\mathcal N_k(w_k|v)$.
The goal is for each receiver to retrieve the original input $v$ with zero probability of error. 
As for the single-channel case, this problem can be treated from a graph-theoretical perspective 
associating to each channel $\mathcal N_k$ a confusability graph $G_k = (V,E_k)$.
Note that the family of graphs $\mathcal{G} = \{G_1,\dots G_\ell \}$ share the same vertex set, since the input set is common.
If we find a subset of the input set $V$ which is an independent set in every graph  $G_k \in \mathcal G$, this subset forms a set of non-confusable inputs for each of the receivers and thus can be used for zero-error communication. 
In general, 
define $\alpha(\mathcal G,n)$ to be the maximum cardinality of a set in $V^n$ which is an independent set in every graph $G_k^{\boxtimes n}$, with $G_k \in \mathcal G$.  
Hence, $\alpha(\mathcal G,n)$ is the maximum number of messages that can be transmitted perfectly to each of the receivers with $n$ channel uses.
Observing that $\alpha(\mathcal G,\cdot)$ is super-multiplicative (\emph{i.e.}, $\forall n_1,n_2 \in \mathbb{N}$: $\alpha(\mathcal G,n_1+n_2) \geq \alpha(\mathcal G,n_1) \alpha(\mathcal G,n_2)$), the Shannon capacity of a family of graphs $\mathcal G$ is well-defined as $c(\mathcal G) = \lim_{n \to \infty} \frac{1}{n} \log \alpha(\mathcal G,n)$. 
Computing this quantity appears to be hard, since it comprises the computation of the Shannon capacity of a single graph as special case.
However, some positive results have been obtained for a slightly different task. Suppose that computing the capacity of every element of the family $\mathcal G$ is easy. Is it possible to determine the capacity of the whole family of graphs? An affirmative answer is given in \cite{Gargano:1994}.

We study the entanglement-assisted version of this problem, 
 where the sender shares a single entangled state with all the receivers,
focusing on the particular instance where all the channels are equal.
Thus, in the rest of the section we assume that all the receivers are connected to the sender through the same channel $\mathcal N$ with confusability graph $G$. This allows us to work directly on $G$, introducing the notation $\alpha_{1,\ell}(G) := \alpha(\mathcal G,1)$ and $c_{1,\ell}(G) := c(\mathcal G)$. 
Note that in the classical case this leads to the trivial situation $\alpha_{1,\ell}(G^{\boxtimes n}) = \alpha(G^{\boxtimes n})$ for each $n \in \mathbb{N}$ (any independent set of $G$ is an independent set for the whole family) and thus $c_{1,\ell}(G) = c(G)$.

\begin{figure}[t]
\begin{center}
 \includegraphics[width=8cm]{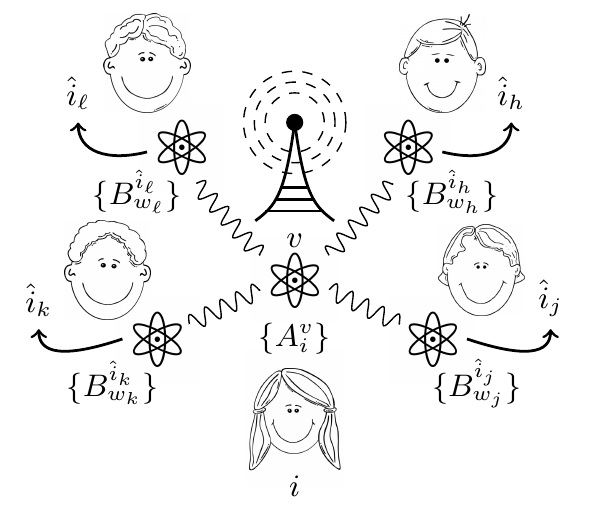}
 \caption{The figure describes an entanglement-assisted compound channel instance, with a single use of the channels. Alice wants to send the same message $i \in [m]$ to every Bob, with whom she shares an entangled state. She performs a measurement $\{A_{i}^{v}\}$ on her part of the entangled state obtaining ${v}$ as outcome. This outcome is used as input for all the channels and the $k$-th Bob receives $w_k$ from $\mathcal N_k$. Each Bob performs a measurement $\{B_{{w}_k}^{{\hat{i}}_k}\}$ (which depends on ${w}_k$) on his part of the entangled state, getting a message ${\hat{i}}_k$ as outcome.
The protocol works if ${\hat{i}}_k$ is equal to ${i}$ for every $k$, \emph{i.e.}, every Bob is able to perfectly learn Alice's original message.
}
\label{fig:broadcast}
\end{center}
\end{figure}

The protocol for the entanglement-assisted compound channel is described in Figure~\ref{fig:broadcast}. 
As in the single-receiver case, the protocol depends only on the confusability graph of the channel and we can define the following quantities.


\begin{defn}\label{def:qcomp}
For a graph $G$, the \emph{entanglement-assisted compound independence number} with $\ell$ receivers $\alpha_{1,\ell}^*(G)$ is defined as the maximum $m\in \mathbb{N}$ such that there exist positive semidefinite operators $\{\rho_i^u, i\in[m],\, u\in V(G)\}$ and $\rho$ acting on a finite-dimensional Hilbert space $\mathcal{H}^{\otimes \ell}$  such that
\begin{align*}
\Tr(\rho) &= 1 \, ,\\
\sum_{u\in V(G)} \rho^u_i &= \rho \quad \forall i\in[m] \, , \\
\Tr_{B_{-k}}(\rho^u_i) \; \Tr_{B_{-k}}(\rho^v_j) &= 0 \quad \forall k\in [\ell], \forall i\ne j,\, \forall u \simeq v \in V(G) \, .
\end{align*}
The \emph{entanglement-assisted compound Shannon capacity} with one sender and $\ell$ receivers is $$c_{1,\ell}^*(G) = \lim_{n\to\infty}\frac{1}{n}\log\alpha_{1,\ell}^*(G^{\boxtimes n}).$$
\end{defn}

\noindent
Operationally, $\alpha_{1,\ell}^*(G)$ is the maximum number of messages that Alice can perfectly communicate to all the Bobs through identical channels with confusability graph $G$ and an entangled state.
This interpretation allows us to immediately see that $\alpha_{1,\ell}^*(G)$ is super-multiplicative with respect to strong graph product, that is $\alpha_{1,\ell}^*(G \boxtimes H) \ge \alpha_{1,\ell}^*(G)\alpha_{1,\ell}^*(H)$.
Indeed, suppose that two channels $\mathcal N_1$ and $\mathcal N_2$ allow an entanglement-assisted compound channel protocol with $\ell$ receivers that communicates $m_1$ and $m_2$ messages, respectively.
Then for the combined channel $\mathcal N_1 \otimes \mathcal N_2$, with confusability graph $G \boxtimes H$, the sender can communicate $m_1 \times m_2$ messages by running subsequently the two protocols.
This observation implies by Fekete's lemma that $c_{1,\ell}^*(G)$ is a well-defined quantity.

\subsection{Entanglement for finite number of channel uses}\label{sec:rec-neg}


In this section we investigate a setting where entanglement does not provide any  advantage.
We prove that for any fixed channel and number of channel uses, there exists a number of receivers for which entanglement does not improve over classical communication.
Recall that $\theta'_e(G)$ denotes the edge-clique cover number of $G$ plus the number of isolated vertices of $G$.

\begin{thm} \label{thm:qmonogamy}
For any graph $G$, if $\ell \geq \theta'_e(G)$ then $\alpha^*_{1,\ell}(G) = \alpha(G).$
\end{thm}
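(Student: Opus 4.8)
The plan is to prove the two inequalities separately. The bound $\alpha^*_{1,\ell}(G)\ge\alpha(G)$ is immediate for every $\ell$: an independent set of $G$ of size $\alpha(G)$, used as a deterministic one-vertex-per-message encoding, gives a valid configuration in Definition~\ref{def:qcomp} on a one-dimensional space. Also $\alpha^*_{1,\ell}(G)$ is non-increasing in $\ell$, since a valid configuration for $\ell+1$ receivers becomes, after regrouping two tensor factors, a valid configuration for $\ell$ receivers. So it suffices to prove $\alpha^*_{1,\ell}(G)\le\alpha(G)$ for $\ell=\theta'_e(G)$. I would start by fixing an optimal edge-clique cover $Q_1,\dots,Q_r$ of $G$ ($r=\theta_e(G)$) and listing the isolated vertices $z_1,\dots,z_s$, so $r+s=\theta'_e(G)=\ell$, and then assign receiver $k$ to $Q_k$ for $k\le r$ and receiver $r+t$ to $z_t$ for $t\le s$.

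Given an optimal configuration $\{\rho^u_i\},\rho$ on $\mathcal{H}^{\otimes\ell}$ with $m:=\alpha^*_{1,\ell}(G)$ messages, write $\sigma^u_{i,k}:=\Tr_{B_{-k}}(\rho^u_i)$ and $\rho_{(k)}:=\Tr_{B_{-k}}(\rho)$; the structural facts I would extract are (i) $\sum_u\sigma^u_{i,k}=\rho_{(k)}$ does not depend on $i$ (no-signaling from Alice to the receivers), and (ii) since $Q_k$ is a clique, all its vertices are pairwise confusable, so the operators $\tau_{i,k}:=\sum_{u\in Q_k}\sigma^u_{i,k}$ ($i\in[m]$) are pairwise orthogonal on $B_k$. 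Let $M^i_k$ be the projection onto $\supp(\tau_{i,k})$: then $\{M^i_k\}_{i}$ are pairwise orthogonal projections, $M^i_k$ absorbs $\sigma^u_{i,k}$ for every $u\in Q_k$ and annihilates $\sigma^u_{j,k}$ for $j\ne i$, $u\in Q_k$ — that is, receiver $k$ can decode Alice's message from $B_k$ alone whenever her outcome falls in $Q_k$. The same construction on the singleton $\{z_t\}$ yields a decoding projection $M^i_{r+t}$ on $B_{r+t}$.

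Next I would assemble these into one projection $P^u_i$ per pair $(u,i)$ on the whole space, namely $P^u_i:=\bigotimes_{k:\,u\in Q_k}M^i_k\otimes\bigotimes_{k:\,u\notin Q_k}I_{B_k}$ when $u$ is not isolated and $P^{z_t}_i:=M^i_{r+t}\otimes I$ otherwise. A short computation then gives $\rho^u_i\le P^u_i$ (each single-factor reduction of $\rho^u_i$ sits inside the relevant $M^i_k$), $P^u_i\perp P^u_j$ for $i\ne j$, and — the key point — $P^u_i\perp P^v_j$ whenever $u\simeq v$ and $i\ne j$, because any such pair lies in a common clique $Q_k$ (or is a common $z_t$) on whose factor the projections $M^i_k,M^j_k$ are orthogonal. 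Together with $\rho=\sum_u\rho^u_i$ for every $i$, this exhibits $\{P^u_i\}$ as a family of projections obeying the single-receiver orthogonality pattern of $G$, but additionally factorised across the $\ell=\theta'_e(G)$ receivers.

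The remaining step — deducing $m\le\alpha(G)$ from this structure — is where the real content lies, and I expect it to be the main obstacle. A naive count, using only $\sum_i\Tr(M^i_k\rho_{(k)})\le 1$ per receiver and that every vertex lies in some $Q_k$ or is some $z_t$, yields just the weaker bound $m\le r+s=\theta'_e(G)$. Obtaining the sharp $m\le\alpha(G)$ must use the tensor-product form of the $P^u_i$ and the incidence geometry of $G$ (how the chosen cliques overlap on shared vertices): one wants to derandomise the single shared state $\rho$ into a classical zero-error code of size $m$; here the orthogonality $P^u_i\perp P^v_j$ for confusable $u\simeq v$ is precisely what should forbid the quantum ``spreading'' of a message over several inputs that a single receiver tolerates. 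This is the manifestation of monogamy of (entanglement, and more generally) non-signaling correlations alluded to in the introduction, and making it rigorous is the heart of the argument.
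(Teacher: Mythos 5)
There is a genuine gap, and you point to it yourself: your argument stops exactly where the theorem begins. The preliminary reductions are fine (the classical lower bound $\alpha^*_{1,\ell}(G)\ge\alpha(G)$, the reduction to $\ell=\theta'_e(G)$, and the construction of the per-receiver decoding projections $M^i_k$ and the tensor-factorised projections $P^u_i$ with the stated orthogonality pattern), but none of this yields $m\le\alpha(G)$. Indeed, projections $\{P^u_i\}$ obeying the orthogonality pattern of $G$ together with $\sum_u\rho^u_i=\rho$ is precisely the structure available to a \emph{single} receiver, where $\alpha^*$ can strictly exceed $\alpha$; so the bound cannot follow from that structure alone, and the extra tensor-product form of the $P^u_i$ is not exploited by any argument you give. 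Your naive count $m\le\theta'_e(G)$ is not even comparable to $\alpha(G)$ in general. The ``derandomisation'' you gesture at is the entire content of the proof, and it is missing.

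The paper closes this gap by leaving the operator formulation altogether and working with non-signaling boxes: it defines $\alpha^{\ns}_{1,\ell}(G)\ge\alpha^*_{1,\ell}(G)$ for the channel with the minimal number $\theta'_e(G)$ of outputs (Definition~\ref{def:nscomp}), symmetrises the optimal $(\ell+1)$-partite distribution over the $\ell$ Bobs, and observes that the resulting Alice--Bob$_1$ marginal is an $\ell$-shareable non-signaling distribution whose Bob-input alphabet (the channel outputs, i.e.\ the cliques and isolated vertices) has size $\theta'_e(G)\le\ell$. The monogamy theorem of~\cite{MAG06} (Theorem~\ref{thm:ns}) then forces this marginal to admit a local-hidden-variable model, i.e.\ shared randomness only, and in the zero-error setting shared randomness gives no advantage over deterministic coding, whence $\alpha^{\ns}_{1,\ell}(G)\le\alpha(G)$ (Theorem~\ref{thm:monogamy}). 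Note in particular that $\theta'_e(G)$ enters as the size of \emph{Bob's input set} in the shareability-implies-locality theorem, not through an assignment of one receiver per clique as in your setup; without the symmetrisation/shareability step (or some substitute for it) your approach has no mechanism to convert the quantum data into a classical code, so the proposal as it stands does not prove the theorem.
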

 
\noindent
This theorem follows directly from Theorem~\ref{thm:monogamy} (in Section~\ref{nonsignaling}), where we prove an analogous result for the situation when the players can use arbitrary non-signaling correlations.
In words, a probability distribution is non-signaling if the marginal distribution of the output of each subset of parties depends only on the corresponding inputs. 
Measuring an entangled quantum state results in a specific class of non-signaling correlations.


%
%
%

\subsubsection{Non-signaling capacity}\label{nonsignaling}

Here we define 
 a \emph{non-signaling} version of the compound
independence number, $\alpha^{\ns}_{1,\ell}(G)$ (Definition \ref{def:nscomp}), and use it to prove Theorem~\ref{thm:qmonogamy}.
%
%
First recall the mathematical definition of non-signaling probability distributions.


\begin{defn}
A $n$-partite probability distribution $P(a_1,a_2,\dots,a_n|x_1,x_2,\dots,x_n)$ is called {\em non-signaling} if for all outputs $a_1,a_2,\dots,a_n$ and all inputs $x_1,x_2,\dots,x_n$, the marginal distribution for each subset of parties $I = \{i_1,i_2 \dots, i_k \} \subseteq [n]$ only depends on the corresponding inputs
$$P(a_{i_1}, a_{i_2}, \dots, a_{i_k} | x_1,x_2,\dots,x_n) = P(a_{i_1},a_{i_2}, \dots, a_{i_k} |x_{i_1},x_{i_2}, \dots, x_{i_k}).$$
\end{defn}


Since any entanglement-assisted strategy is also non-signaling, the parties can always communicate at least as much information using a non-signaling strategy as they can using entanglement.

Since we are studying the problem of sending information over a channel with zero-error, we have restricted our attention to the properties of the confusability graph of the channel.
However, many channels can have the same confusability graph and, unlike the classical and entanglement-assisted capacities, the non-signaling capacity depends on the particular channel.
For our purposes we are interested in the particular channel that minimizes the number of outputs while keeping the same confusability graph. Notice that every output of a channel defines a clique or an isolated vertex in the confusability graph. 
Therefore, we fix a edge-clique covering of the confusability graph of minimum cardinality $\theta_e(G)$ (which might not be unique), we add the isolated vertices to obtain a clique covering of cardinality $\theta'_e(G)$, and we consider the channel that has $\theta'_e(G)$ outputs. In other words, we take a channel which has one output per element of the edge-clique covering plus one output per isolated vertex. 


For the two-party zero-error channel-coding problem, the non-signaling version was studied in \cite{CLMW09} (see also \cite{CLMW11}). 
There it is shown that the non-signaling zero-error channel capacity has an elegant closed formula that can be computed from the description of the channel.
In the two-party case, a protocol for a single use of the channel $\mathcal N$ (with confusability graph $G$) is the following.
%
%
Alice wants to transmit the
message $i \in [m]$. As a resource, Alice and Bob have access to a
non-signaling distribution $P(v, \hat{i} | i, c)$ where Alice inputs $i$
and obtains a vertex $v \in V(G)$ that she sends over the channel. 
Bob's channel output can be interpreted as a clique (or isolated vertex) $c \subseteq V(G)$ which yields final output $\hat{i} \in [m]$. 
The transmission is successful if we always have that $i=\hat{i}$.

We can now directly extend this protocol to the compound channel with $\ell$ receivers, where Alice and the Bobs share a $\ell +1$-partite non-signaling probability distribution. 
To communicate message $i \in [m]$, Alice inputs $i$ to the non-signaling distribution $P(v, \hat{i}_1, \ldots, \hat{i}_\ell | i, c_1, \ldots, c_\ell)$ and uses her output $v \in V(G)$ as input of the channel. 
For each Bob, his channel output $c_k \subseteq V(G)$, where $v \in c_k$, is either a clique or an isolated vertex of $G$ and if it is used as input to the non-signaling distribution it gives $\hat{i}_k$ as output. 
The protocol works if every single Bob learns $i$ with zero probability of error, that is $i = \hat{i}_k$ for every $k \in \mathbb{N}$.

\begin{defn}\label{def:nscomp}
The \emph{non-signaling compound independence number $\alpha^{\ns}_{1,\ell}(G)$} is the maximum $m \in \mathbb{N}$ such that there exists a non-signaling
distribution 
\[P(v, \hat{i}_1, \ldots, \hat{i}_\ell | i, c_1, \ldots, c_\ell)\] between Alice and Bob$_1$, \ldots, Bob$_\ell$, where $i \in [m]$ and $c_k \subseteq V(G)$ are elements of a fixed clique covering of cardinality $\theta'_e(G)$.
Additionally, we require for all $i \in [m]$ the following: If vertex $v$ is contained in $c_k$ for all $k \in [\ell]$ and there exists a $k' \in [\ell]$ such that $i \neq \hat{i}_{k'}$, then $P(v, \hat{i}_1, \ldots, \hat{i}_\ell | i, c_1, \ldots, c_\ell)=0$. 
\end{defn}

The last requirement imposes the perfect correctness of the protocol. 
Every Bob must output the correct message $i$ in case he 
received as channel output a $c_k$ which is compatible with Alice's channel input $v$ (\emph{i.e.},~if $v \in c_k$ for every $k \in [\ell]$).

As already mentioned, since every quantum strategy is also non-signaling, for every graph $G$ and $\ell \in \mathbb{N}$, it holds that $\alpha^{\ns}_{1,\ell}(G) \geq \alpha^*_{1,\ell}(G)$.

\begin{thm} \label{thm:monogamy}
For all graphs $G$, if $\ell \geq \theta'_e(G)$ then
$$\alpha^{\ns}_{1,\ell}(G) = \alpha^*_{1,\ell}(G) = \alpha(G).$$
\end{thm}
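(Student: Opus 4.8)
The plan is to prove the chain of (in)equalities $\alpha(G) \le \alpha^*_{1,\ell}(G) \le \alpha^{\ns}_{1,\ell}(G) \le \alpha(G)$, where the first inequality is trivial (any classical zero-error code works, with the shared state ignored) and the middle one is already noted in the text (every quantum strategy is non-signaling). So the whole content lies in showing $\alpha^{\ns}_{1,\ell}(G) \le \alpha(G)$ under the hypothesis $\ell \ge \theta'_e(G)$. First I would set up notation: fix a minimum edge-clique cover of $G$ together with the isolated vertices, giving a clique cover $\{c_1,\dots,c_{\theta'_e(G)}\}$, and suppose for contradiction that some non-signaling distribution $P(v,\hat i_1,\dots,\hat i_\ell \mid i, c_1,\dots,c_\ell)$ realizes $m = \alpha^{\ns}_{1,\ell}(G) > \alpha(G)$ messages.

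The key idea is a pigeonhole/covering argument exploiting that $\ell$ is at least the number of cliques in the cover. Since $m > \alpha(G)$, the $m$ ``typical'' channel inputs used cannot form an independent set, so there must be two messages $i \ne j$ and a clique $c$ in the fixed cover containing (with positive probability under the respective inputs) a vertex consistent with message $i$ and a vertex consistent with message $j$. The plan is to assign to each of the $\ell$ receivers one of the $\theta'_e(G) \le \ell$ cliques of the cover as ``the channel output that Bob$_k$ receives'' — concretely, condition the non-signaling box on Bob$_k$'s input being $c_k$ for the $k$-th clique (and reuse cliques if $\ell > \theta'_e(G)$, which only makes the constraint set larger). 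Non-signaling is what lets us freely fix these local inputs without disturbing Alice's marginal. Then I would argue: for message $i$, Alice's output vertex $v$ lies in some clique $c_{k}$ of the cover (every vertex lies in at least one clique, since isolated vertices get their own); by the correctness constraint in Definition~\ref{def:nscomp}, conditioned on Bob$_k$ receiving $c_k$, Bob$_k$ must output $i$ whenever $v \in c_k$. The same holds for message $j$ on its clique. When $i$ and $j$ share a clique $c$ in the cover — which they must, because otherwise the supports of the inputs for all the $m$ messages would be pairwise non-adjacent, i.e. an independent set of size $m > \alpha(G)$ — the receiver assigned to $c$ would be forced to output both $i$ and $j$ on inputs consistent with the same channel output, contradicting perfect correctness.

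To make the ``shared clique'' step precise I would define, for each message $i \in [m]$, the support set $S_i = \{v \in V(G) : P(v \mid i) > 0\}$ (a well-defined marginal by non-signaling), pick a representative $v_i \in S_i$, and observe that if $\{v_1,\dots,v_m\}$ were an independent set we would be done with $m \le \alpha(G)$; otherwise some $v_i \sim v_j$, hence $v_i,v_j$ lie in a common clique $c$ of the edge-clique cover. Assigning the receiver handling $c$ its channel output $c$ (legitimate by non-signaling, since we are only conditioning on that receiver's input), correctness forces that receiver's output to equal $i$ on the event $\{$Alice sends $v_i\}$ and to equal $j$ on $\{$Alice sends $v_j\}$; but both events have positive probability and $i \neq j$ is allowed — the genuine contradiction comes from noting that the receiver's output is a function only of its own input $c$ (again non-signaling), independent of Alice's message, so it cannot be $i$ in one case and $j$ in the other. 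Hence no such $P$ with $m > \alpha(G)$ exists, giving $\alpha^{\ns}_{1,\ell}(G) = \alpha(G)$, and the sandwich closes the proof, which in turn yields Theorem~\ref{thm:qmonogamy}.

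The main obstacle I anticipate is the careful handling of the non-signaling conditioning: one must verify that fixing each receiver's channel-input label $c_k$ is a valid operation (it is, precisely because non-signaling guarantees Alice's and the other receivers' marginals are unaffected), and that when $\theta'_e(G) < \ell$ the reuse of cliques across receivers does not cause a problem — it does not, since adding more receivers only adds correctness constraints and cannot increase $\alpha^{\ns}_{1,\ell}$, so it suffices to treat $\ell = \theta'_e(G)$. A secondary subtlety is that a receiver's channel output need not literally be one of the fixed cliques $c_k$ but only a set containing Alice's vertex; the reduction to the fixed minimal-cardinality clique cover described in the paragraph preceding Definition~\ref{def:nscomp} is what pins this down, and I would invoke it explicitly.
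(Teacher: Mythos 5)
Your reduction to showing $\alpha^{\ns}_{1,\ell}(G)\le\alpha(G)$ and the sandwich $\alpha(G)\le\alpha^*_{1,\ell}(G)\le\alpha^{\ns}_{1,\ell}(G)$ are fine, but the core of the argument has a genuine gap, and it cannot be repaired along these lines: the step that is supposed to produce the contradiction only ever inspects a single receiver with a single fixed channel output, so the same reasoning would apply verbatim with $\ell=1$ and would ``prove'' $\alpha^{\ns}_{1,1}(G)\le\alpha(G)$ for every $G$, contradicting the known separations $\alpha(G)<\alpha^*(G)\le\alpha^{\ns}_{1,1}(G)$ from \cite{CLMW09}. Two concrete problems. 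First, the correctness constraint in Definition~\ref{def:nscomp} forces $\hat{i}_k=i$ only on transcripts where $v\in c_{k'}$ for \emph{all} $k'\in[\ell]$; if you hand distinct cliques of the cover to the different receivers, Alice's vertex will generically fail to lie in all of them and the constraint becomes vacuous, so ``Bob$_k$ must output $i$ whenever $v\in c_k$'' does not follow. Second, if instead you give every receiver the same clique $c\ni v_i,v_j$, all you learn is that the marginal $P(\hat{i}_1\mid c)$ puts mass at least $P(v=v_i\mid i)>0$ on $i$ and mass at least $P(v=v_j\mid j)>0$ on $j$. That is not a contradiction: non-signaling fixes the receiver's output \emph{distribution} given $c$, not a deterministic output, and a non-signaling box may correlate $\hat{i}_1$ with Alice's channel output $v$ so that $\hat{i}_1=i$ exactly on the sub-event $\{v\in c\}$ under message $i$ and $\hat{i}_1=j$ on the sub-event $\{v\in c\}$ under message $j$; each of these events can have probability well below $1$. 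This correlation is precisely what lets non-signaling (and entangled) strategies beat $\alpha(G)$ for a single receiver.

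The missing idea is monogamy, which is how the hypothesis $\ell\ge\theta'_e(G)$ actually enters. The paper's proof symmetrizes the optimal $(\ell+1)$-partite box over the Bobs (which preserves non-signaling and the correctness constraint), observes that the resulting Alice--Bob$_1$ marginal is $\ell$-shareable with respect to Bob$_1$ in the sense of Definition~\ref{def:shareable}, and invokes Theorem~\ref{thm:ns} of \cite{MAG06}: a non-signaling box that is $\ell$-shareable with $\ell$ at least the size of Bob's input alphabet --- here $\theta'_e(G)$, the number of channel outputs of the chosen minimal channel --- admits a local hidden variable model. Since shared randomness gives no advantage for zero-error transmission, Alice and Bob$_1$ can then send at most $\alpha(G)$ messages. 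Your proposal has no substitute for this step.
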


In order to prove the theorem, we use the monogamy of non-signaling
distributions as derived in~\cite{MAG06}. For convenience, we
reproduce the definition and result here.

\begin{defn} \cite{MAG06} \label{def:shareable}
A non-signaling probability distribution $P(a,b | x,y)$ is called
\emph{$\ell$-shareable with respect to Bob}, 
if there exists an $(\ell+1)$-partite non-signaling probability distribution\\
$Q(a,b_1,\ldots,b_\ell|x,y_1,\ldots, y_\ell)$ such that:
\begin{enumerate}
\item For all permutations $\pi \in \Pi(\ell)$, we have that 
\[ Q(a,b_{\pi(1)},\ldots,b_{\pi(\ell)}|x,y_{\pi(1)},\ldots,y_{\pi(\ell)}) =
Q(a,b_1,\ldots,b_\ell|x,y_1,\ldots,y_\ell) \, .
\]
\item It holds that
\[\sum_{b_2,\ldots,  b_\ell} Q(a,b_1,\ldots,b_\ell |
  x,y_1,\ldots,y_\ell) = P(a,b_1 | x,y_1) \, . \]
\end{enumerate}
\end{defn}
Note that if both conditions hold, we have that for all $k \in [\ell]$
\begin{align} \label{eq:shareable} 
\sum_{b_1,\ldots,
    b_{k-1},b_{k+1}, \ldots, b_\ell} \!\!\!\!\! Q(a,b_1,\ldots,b_\ell |
  x,y_1,\ldots,y_\ell) = P(a,b_k | x,y_k)  \, .
\end{align}

\begin{thm} \label{thm:ns} \cite{MAG06}
Let $Y$ be the set of different values for the input $y$ and suppose $\ell \geq |Y|$. If $P(a, b|x, y)$ is a non-signaling
  distribution which is $\ell$-shareable with respect to Bob, then
  $P(a,b|x,y)$ admits a local hidden variable model. Formally, there exists a distribution $Q(\lambda)$ over the hidden variables $\lambda$ as well as local strategies $A(a|x,\lambda)$ for Alice and $B(b|b,\lambda)$ for Bob such that $P(a,b|x,y) = \sum_{\lambda} Q(\lambda) A(a|x,\lambda) B(b|y, \lambda)$.
\end{thm}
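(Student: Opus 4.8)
The plan is to construct the local hidden variable explicitly, taking as the hidden variable the joint list of outputs produced by the Bobs when each is queried with a \emph{different} input. The hypothesis $\ell \geq |Y|$ is precisely what makes this possible: writing $Y = \{y^{(1)},\dots,y^{(|Y|)}\}$ for the distinct input values, there are enough Bobs to probe every value of $Y$ simultaneously. The one consequence of the shareability assumptions I will need is \eqref{eq:shareable}, which tells me that the marginal of the $(\ell+1)$-partite distribution $Q$ onto Alice and any single Bob $k$ (with the other Bobs summed out) equals $P(a,b_k\mid x,y_k)$.

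First I would fix the input assignment: give Bob $j$ the input $y^{(j)}$ for $j\in[|Y|]$, assign the remaining $\ell-|Y|$ Bobs an arbitrary fixed input (say $y^{(1)}$), and marginalize their outputs away. Setting $\lambda:=(b_1,\dots,b_{|Y|})$ for the tuple of outputs of the first $|Y|$ Bobs, I define
$$R(a,\lambda\mid x):=\sum_{b_{|Y|+1},\dots,b_\ell} Q\bigl(a,b_1,\dots,b_\ell \mid x, y^{(1)},\dots,y^{(|Y|)},y^{(1)},\dots,y^{(1)}\bigr).$$
Here $\lambda$ plays the role of a complete response table for Bob: its $j$-th coordinate records what Bob would have answered on input $y^{(j)}$. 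The candidate model is then $Q(\lambda):=\sum_a R(a,\lambda\mid x)$, the conditional $A(a\mid x,\lambda):=R(a,\lambda\mid x)/Q(\lambda)$ (defined arbitrarily where $Q(\lambda)=0$), and the deterministic Bob strategy $B(b\mid y^{(j)},\lambda):=\delta_{b,\lambda_j}$.

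The main obstacle is conceptual rather than computational, and it lies in checking that $Q(\lambda)$ is genuinely a distribution over $\lambda$, i.e.\ independent of Alice's input $x$; this is the single place where non-signaling is indispensable. Indeed, $\sum_a R(a,\lambda\mid x)$ is the marginal of $Q$ onto the subset consisting of all the Bobs (Alice traced out), and by the non-signaling property this marginal depends only on the Bobs' inputs, which have been fixed, hence not on $x$. With $Q(\lambda)$ well-defined, the decomposition $R(a,\lambda\mid x)=Q(\lambda)A(a\mid x,\lambda)$ holds by construction.

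It then remains to verify that the model reproduces $P$. For each input $y^{(j)}$,
$$\sum_\lambda Q(\lambda)\,A(a\mid x,\lambda)\,B(b\mid y^{(j)},\lambda)=\sum_{\lambda:\lambda_j=b} R(a,\lambda\mid x),$$
and the right-hand side is exactly the marginal of $R$ onto Alice and Bob $j$ (all other Bob outputs summed out), which equals $P(a,b\mid x,y^{(j)})$ by \eqref{eq:shareable}. This is where both conditions of Definition~\ref{def:shareable} are used: the single-Bob marginal condition together with the permutation symmetry guarantees that \eqref{eq:shareable} holds for every $j\in[|Y|]$, not merely for $j=1$, so the argument covers all inputs $y\in Y$. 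Collecting these identities yields the claimed LHV decomposition and completes the proof.
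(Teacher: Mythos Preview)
Your proof is correct and follows essentially the same approach as the paper: both take the hidden variable to be the tuple of Bob outputs when the Bobs are queried with all distinct values of $Y$, use non-signaling to show the resulting distribution over $\lambda$ is independent of $x$, define Bob's strategy as the deterministic lookup $B(b\mid y^{(j)},\lambda)=\delta_{b,\lambda_j}$, and invoke \eqref{eq:shareable} to recover $P$. The only cosmetic difference is that the paper reduces the case $\ell>|Y|$ to $\ell=|Y|$ by observing that $\ell$-shareability implies $|Y|$-shareability, whereas you handle it directly by assigning the surplus Bobs a dummy input and marginalizing them out.
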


\begin{proof}
Assume without loss of generality that $Y=\set{1,2,\ldots,|Y|}$.
The idea of the proof is to ask all possible questions $y=1,2,\ldots,|Y|$ to $|Y|$ different Bobs (which is possible because $\ell \geq |Y|$) and use their answers $b_1,\ldots,b_{|Y|}$ to these questions as hidden variables $\lambda$.

Assume for now that $\ell = |Y|$. 
Let us fix the questions to the $\ell$ Bobs as $y_1=1, y_2=2, \ldots, y_\ell = \ell$ and abbreviate this event with $\event$.
We can then write
\begin{align*}
P(a,b | x,y) 
&\stackrel{\eqref{eq:shareable}}{=}  \sum_{\stackrel{b_1,\ldots,b_\ell}{b_y = b}} Q(a, b_1, \ldots, b_\ell | x, \event) \\
&= \sum_{b_1,\ldots,b_\ell} 
Q(b_1,\ldots,b_\ell|x,\event)
\cdot Q(a | b_1, \ldots, b_\ell, x,\event) \cdot \delta_{b,b_y} \, .
\end{align*}
Due to non-signaling, $Q(b_1,\ldots,b_\ell|x,\event) = Q(b_1,\ldots,b_\ell|\event)=Q(\lambda|\event)$. The conditional distribution $Q(a | b_1, \ldots, b_\ell, x,\event)$ defines Alice's strategy $A(a|\lambda, x, \event)$. Bob's strategy $B(b|\lambda, y, \event)$ is defined by giving the answer $b=b_y$ of the $y$-th Bob. In summary, we obtain a local-hidden-variable representation of $P$:
\begin{align*}
P(a,b | x,y) 
&= \sum_{\lambda} 
Q(\lambda|\event) \cdot A(a | \lambda, x, \event) \cdot B(b | \lambda, y, \event)  \, .
\end{align*}

In case that $\ell > |Y|$, we observe that $\ell$-shareability of $P(a,b|x,y)$ implies $|Y|$-shareability.
Hence, the above proof applies.
\end{proof}

\begin{proof} [Proof of Theorem~\ref{thm:monogamy}]
Let $P(v, \hat{i}_1, \ldots, \hat{i}_\ell | i, c_1, \ldots, c_\ell)$ be the optimal
non-signaling probability distribution achieving
$\alpha^{\ns}_{1,\ell(G)}$. We define the following distribution
\[ Q(v, \hat{i}_1, \ldots, \hat{i}_\ell | i, c_1, \ldots, c_\ell) := \sum_{\pi \in
  \Pi(\ell)} \frac{1}{|\Pi(\ell)|} P(v, \hat{i}_{\pi(1)}, \ldots, \hat{i}_{\pi(\ell)} | i, c_{\pi(1)},
\ldots, c_{\pi(\ell)}) \, ,
\]
which clearly fulfills the first condition of
Definition~\ref{def:shareable}. By assumption, we have that for all $i
\in [m]$, $P(v, \hat{i}_1, \ldots, \hat{i}_\ell | i, c_1,
\ldots, c_\ell) =0$ whenever $v \in c_k$ for all $k\in [\ell]$ and there is a $k'$ such that $i \neq \hat{i}_{k'}$. As this condition holds for each pair of Alice and Bob$_k$ individually, it is invariant under permutations of Bobs. Therefore, the same condition also holds for $Q$. 
Since any convex combination of non-signaling distributions is also non-signaling, it follows that
$Q$ can also be used to achieve $\alpha_{1,\ell}^{\ns}(G)$. We now focus
on the marginal distribution $Q(v,\hat{i}_1 | i, c_1)$ between Alice and the
first Bob. This distribution is non-signaling and $\ell$-shareable by
construction where $\ell \geq \theta'_e(G)$, \emph{i.e.}, $\ell$ is greater or equal to the number of outputs of the specific channel we consider. By Theorem~\ref{thm:ns}, $Q$ admits a
local-hidden-variable model. In other words, Alice and the first Bob
can achieve the distribution by using classical shared randomness. 
However, as we are considering the zero-error scenario, shared randomness does not improve over the deterministic classical setting.
Therefore Alice and the first Bob are unable to transmit
more than $\alpha(G)$ messages over the channel,
 showing that
$\alpha^{\ns}_{1,\ell}(G) \leq \alpha(G)$. 
The claim of the theorem then follows by combining the inequality with $\alpha(G) = \alpha_{1,\ell}(G) \leq \alpha^*_{1,\ell}(G) \leq \alpha^{\ns}_{1,\ell}(G)$.
\end{proof}

\subsection{Entanglement can improve the capacity for a finite number of receivers}

The authors of~\cite{Briet:2013} present a new lower bound technique for the entanglement-assisted Shannon capacity based on 
quantum remote state preparation \cite{Bennett:2001}.
We give a review of this technique while considering the compound channel scenario.
To simplify the presentation we will use the quantum teleportation protocol of Bennett et al.~\cite{Bennett:1993} instead of quantum remote state preparation, however the key idea of the technique remains unchanged. 
While using remote state preparation would give a slightly better lower bound, this goes beyond the precision we need for our result. 
We show that there exist graphs for which
the entanglement-assisted capacity is strictly bigger than the classical one for every fixed number of receivers: for every fixed $\ell \geq 1$ we can find a graph such that 
 $ c_{1,\ell}^*(G) > c_{1,\ell}(G)$.

\subsubsection{Lower bound by teleportation}\label{sec:tele}

We first describe the lower bound for the two-party case.
Recall that an orthogonal representation of a graph $G$ of dimension $d$ is a map $f$ from $V(G)$ to the complex $d$-dimensional unit sphere that maps adjacent vertices to orthogonal vectors.
Let $\xi(G)$ be the minimum dimension of an orthogonal representation of $G$ and $t \in \mathbb{N}$.
The lower bound is obtained using the following protocol:\begin{enumerate}
\item Alice uses the channel $t$ times to send any $t$-tuple of vertices $v_1, \dots , v_t$. Bob receives the corresponding outputs $w_1, \dots , w_t$.
\item Alice teleports the quantum states given by the orthogonal representation 
$f(v_1), \dots , f(v_t)$ 
to Bob using the shared (maximally) entangled state. In total, she teleports a state of dimension $\xi(G)^t$.
Bob now has the rotated versions of the states and needs $2t \log \xi(G)$ classical bits of information to complete the teleportation protocol.
\item\label{step3} Alice uses the channel one more time to send Bob the $2t \log \xi(G)$ classical bits of correction with zero-error (using an independent set of $G$).
\item For each $j\in [t]$, Bob uses his output $w_j$ to identify a clique of $G$ that contains $v_j$. He can then define a projective measurement on the orthogonal representations of the elements of such clique and recover $v_j$ with probability 1 by measuring $f(v_j)$.
\end{enumerate}

The protocol works if $t$ is such that $\alpha(G) \ge \xi(G)^{2t}$, implying that Step \ref{step3} can be performed. 
If this is the case,  Alice and Bob can transmit a message set of size $m=|V|^t$ using the channel $t+1$ times.

The protocol can be easily extended to the multi-receiver case.
Suppose that there are $\ell$ Bobs and that Alice shares 
an independent maximally entangled state with each of them.
With $t+\ell$ uses of the channel, Alice can perfectly communicate one out of $|V(G)|^t$ messages with zero-probability of error to all the Bobs. 
In fact, with $t$ uses of the channel, Alice sends any $t$-tuple of vertices $v_1, \dots, v_t$. She then teleports their orthogonal representations to each Bob individually and uses the channel $\ell$ additional times to send the classical bits that complete teleportation. The $k$-th Bob will consider the $(t+k)$-th use of the channel as his output and ignore the others.
%
This protocol gives the lower bound: 
\begin{equation} \label{eq:telepbound}
c^*_{1,\ell}(G) \geq \frac{1}{t+\ell} \log |V(G)|^t,
\end{equation}
where we chose $t \in \mathbb{N}$ such that $\alpha(G) \ge \xi(G)^{2t}$ (or equivalently $t\leq \frac{\log(\alpha(G))}{2\log(\xi(G))}$).

\subsubsection{Quarter orthogonality graph}\label{sec:quartorth}
In this subsection, we present a graph family for which entanglement improves the zero-error capacity with finitely many Bobs. These graphs were introduced in~\cite{Briet:2012} and, in~\cite{Briet:2013}, they were used to exhibit an infinite family of graphs for which exists separation between classical and entanglement-assisted capacity.

\begin{defn} Let $k$ be an odd positive integer. The \emph{quarter orthogonality graph} $\Gamma_k$ has as vertex set the set of vectors $u\in \{ \pm 1\}^{k+1}$ such that $u_{1} = 1$ and $u$ has an even number of -1 entries. The edge set consists of the pairs of orthogonal vectors.\footnote{$\Gamma_k$ is called quarter orthogonality graph because it is an induced subgraph of the orthogonality graph $\Omega_{k+1}$ (as defined in Section~\ref{sec:not}) with one quarter of its vertices.}
\end{defn}

For this graph, it is shown in~\cite{Briet:2013} (see \cite{Briet:2012} for a similar result) that $c(\Gamma_k) \leq 0.846 k$ holds for $k = 4p^s-1$ where $p$ is an odd prime and $s \in \mathbb{N}$. Thus, for all $\ell \in \mathbb{N}$ and $k$ as above, $ c_{1,\ell}(\Gamma_k) \leq 0.846 k $.
The map $f: V(\Gamma_k) \to \mathbb{R}^{k+1}$ such that $f(v) = v/\sqrt{k+1}$ is an orthogonal representation of $\Gamma_k$ and, hence, $\xi(\Gamma_k) \leq k+1$.
If $t$ is such that $(k+1)^{2t} \leq \alpha(\Gamma_k)$, we have $\xi(\Gamma_k)^{2t} \leq \alpha(\Gamma_k)$ and we can use the
protocol
 from the previous section to communicate $|V(\Gamma_k)|^t$ messages.
A simple lower bound on the independence number of the graph $\Gamma_k$ can be derived considering the set $U$ of vertices that have ones in their first $(k+3)/2$ coordinates. It is easy to see that $U$ is an independent set, and that its cardinality is $2^{(k-3)/2}$ by recalling that the vertices of $\Gamma_k$ have an even number of $-1$ entries, 

We show that the teleportation lower bound previously described gives a separation between $c_{1,\ell}(\Gamma_k)$ and $c^*_{1,\ell} (\Gamma_k)$ for certain $k$ and $\ell \geq 1$.

\begin{thm} For every odd integer $k\geq 5$ and $\ell \in \mathbb{N}$,
$$ c^*_{1,\ell} (\Gamma_k) \geq  \frac{t}{t+\ell} (k-1)  $$
with $t = \lfloor \frac{k-3}{4 \log(k+1)}\rfloor $.
\end{thm}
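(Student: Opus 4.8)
The plan is to apply the teleportation lower bound (\ref{eq:telepbound}) from Section~\ref{sec:tele} to the graph $G=\Gamma_k$, after collecting the three quantities that feed into it: the number of vertices, (an upper bound on) the orthogonal rank, and (a lower bound on) the independence number. Recall that (\ref{eq:telepbound}) states $c^*_{1,\ell}(G)\ge\frac{1}{t+\ell}\log|V(G)|^t$ for any $t$ satisfying $\alpha(G)\ge\xi(G)^{2t}$, so the whole proof amounts to instantiating this with the optimal admissible $t$.

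First I would record that $|V(\Gamma_k)|=2^{k-1}$: the first coordinate is fixed to $1$, so the remaining $k$ coordinates range over $\{\pm1\}^k$ subject to containing an even number of $-1$'s, which leaves $2^{k-1}$ vectors, hence $\log|V(\Gamma_k)|=k-1$. Next, the map $f(v)=v/\sqrt{k+1}$ sends adjacent (i.e.\ orthogonal) vertices of $\Gamma_k$ to orthogonal unit vectors, so it is an orthogonal representation and $\xi(\Gamma_k)\le k+1$. Finally, as already observed in this subsection, the set $U$ of vertices whose first $(k+3)/2$ coordinates all equal $1$ is independent: two such vectors agree in positions $1,\dots,(k+3)/2$ and can disagree only in the remaining $(k-1)/2$ positions, so their inner product is at least $(k+3)/2-(k-1)/2=2>0$, and they are never orthogonal. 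Counting the admissible free tails (half of $\{\pm1\}^{(k-1)/2}$, by the even-parity constraint) gives $|U|=2^{(k-3)/2}$, hence $\alpha(\Gamma_k)\ge 2^{(k-3)/2}$.

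With these in hand I would verify that the chosen $t=\lfloor\frac{k-3}{4\log(k+1)}\rfloor$ satisfies the hypothesis of the protocol. Indeed, $t\le\frac{k-3}{4\log(k+1)}$ is equivalent to $2t\log(k+1)\le\frac{k-3}{2}$, i.e.\ $(k+1)^{2t}\le 2^{(k-3)/2}$, and therefore $\xi(\Gamma_k)^{2t}\le(k+1)^{2t}\le 2^{(k-3)/2}\le\alpha(\Gamma_k)$, which is exactly the condition $\alpha(G)\ge\xi(G)^{2t}$ required for Step~3 of the teleportation protocol to go through. Plugging $\log|V(\Gamma_k)|^t=t(k-1)$ into (\ref{eq:telepbound}) then yields
\[
c^*_{1,\ell}(\Gamma_k)\ \ge\ \frac{1}{t+\ell}\,t(k-1)\ =\ \frac{t}{t+\ell}(k-1),
\]
as claimed. (If it happens that $t=0$, the asserted bound is merely $c^*_{1,\ell}(\Gamma_k)\ge 0$, which is trivially true, so one may assume $t\ge1$ throughout.)

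I do not expect a genuinely hard step: all the real work lives in the teleportation protocol of Section~\ref{sec:tele} and in the estimates $\xi(\Gamma_k)\le k+1$ and $\alpha(\Gamma_k)\ge 2^{(k-3)/2}$ recalled above, and the proof is just the bookkeeping of feeding these into (\ref{eq:telepbound}). The only point that needs a little care is checking that the floor in the definition of $t$ is precisely the threshold at which $(k+1)^{2t}$ stops exceeding our lower bound $2^{(k-3)/2}$ on $\alpha(\Gamma_k)$, together with noting that the degenerate case $t=0$ is harmless.
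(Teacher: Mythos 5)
Your proof is correct and follows exactly the paper's route: verify that $t=\lfloor\frac{k-3}{4\log(k+1)}\rfloor$ satisfies $\xi(\Gamma_k)^{2t}\le(k+1)^{2t}\le 2^{(k-3)/2}\le\alpha(\Gamma_k)$ and then substitute $|V(\Gamma_k)|=2^{k-1}$ into the teleportation bound~\eqref{eq:telepbound}. You merely spell out the supporting estimates (the independent set $U$, the count $2^{(k-3)/2}$, and the degenerate case $t=0$) in more detail than the paper does.
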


\begin{proof}
Note that if $t = \lfloor \frac{k-3}{4 \log(k+1)}\rfloor $, we have from the discussion above that $\xi(\Gamma_k)^{2t} \leq (k+1)^{2t} \leq 2^{(k-3)/2} \leq \alpha(\Gamma_k)$.
Therefore we can apply Equation~\eqref{eq:telepbound} and obtain the desired bound, since $|V(\Gamma_k)| = 2^{k-1}$.
\end{proof}

\begin{cor}\label{cor:compound}
Consider any $k = 4p^s-1$ such that $p$ is an odd prime and $s \in \mathbb{N}$.
If  
$
\ell < \frac{0.144k-1}{0.856k} \lfloor \frac{k-3}{4\log(k+1)} \rfloor
$
then $c^*_{1,\ell}(\Gamma_k) > c_{1,\ell}(\Gamma_k)$.
\end{cor}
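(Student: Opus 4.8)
The plan is to deduce the corollary by sandwiching $c^*_{1,\ell}(\Gamma_k)$ strictly between two quantities that are already at our disposal. On the one hand, the preceding theorem gives the lower bound $c^*_{1,\ell}(\Gamma_k) \ge \frac{t}{t+\ell}(k-1)$ with $t = \lfloor \frac{k-3}{4\log(k+1)}\rfloor$. On the other hand, for $k = 4p^s-1$ with $p$ an odd prime we recalled above that $c(\Gamma_k) \le 0.846k$, and since classically $c_{1,\ell}(\Gamma_k) = c(\Gamma_k)$, this gives $c_{1,\ell}(\Gamma_k) \le 0.846k$. Hence the whole content of the corollary reduces to the elementary claim that the stated hypothesis on $\ell$ forces $\frac{t(k-1)}{t+\ell} > 0.846k$.

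First I would turn $\frac{t(k-1)}{t+\ell} > 0.846k$ into a linear inequality in $\ell$: clearing the positive denominator $t+\ell$ it becomes $t(k-1) > 0.846k(t+\ell)$, i.e. $(k-1-0.846k)\,t > 0.846k\,\ell$, i.e. $\ell < \frac{(0.154k-1)\,t}{0.846k}$. Then I would check that the hypothesis $\ell < \frac{(0.144k-1)\,t}{0.856k}$ implies this, i.e. that $\frac{0.144k-1}{0.856k} \le \frac{0.154k-1}{0.846k}$; cross-multiplying by the positive factors $0.856k$ and $0.846k$, the difference of the two sides equals $0.01\,k(k-1)$, which is positive for $k \ge 2$, so the implication holds and is in fact strict. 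Combining, under the hypothesis we get
\[
c^*_{1,\ell}(\Gamma_k) \ \ge\ \frac{t(k-1)}{t+\ell}\ >\ 0.846k\ \ge\ c_{1,\ell}(\Gamma_k),
\]
which is exactly the assertion.

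Two bookkeeping points I would make explicit. For the reduction I use $0.144k-1 > 0$, which is automatic since $k = 4p^s-1$ with $p$ an odd prime forces $k \ge 11$; and if $t = 0$ (which happens for the smallest admissible values of $k$) the hypothesis $\ell < 0$ is unsatisfiable, so the statement is vacuously true, while for $t \ge 1$ the denominator-clearing step is legitimate. Honestly I do not anticipate a genuine obstacle here: the mathematical substance sits entirely in the teleportation lower bound of the previous theorem and in the bound $c(\Gamma_k) \le 0.846k$ of~\cite{Briet:2013}, and what is left is arithmetic with the constants. The only mild care needed is to verify that the rounded constants $0.144$ and $0.856$ appearing in the hypothesis leave enough room over the exact threshold coming from $1-0.846 = 0.154$ and $0.846$ so that the final inequality is strict — and the computation $0.01\,k(k-1) > 0$ confirms this with room to spare.
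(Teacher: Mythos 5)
Your proposal is correct and follows exactly the paper's argument: combine the teleportation lower bound $c^*_{1,\ell}(\Gamma_k)\ge\frac{t}{t+\ell}(k-1)$ with $c_{1,\ell}(\Gamma_k)=c(\Gamma_k)\le 0.846k$ and verify by the stated algebra (which the paper leaves as ``easy algebraic manipulations'') that the hypothesis on $\ell$ forces $\frac{t(k-1)}{t+\ell}>0.846k$. Your explicit check that the rounded constants leave a margin of $0.01\,k(k-1)$, and your remarks on the vacuous case $t=0$, only make the paper's one-line computation more careful.
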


\begin{proof}
Easy algebraic manipulations give that for every $\ell < \left(\frac{0.144k-1}{0.856k} \right) t$ where $t =  \lfloor \frac{k-3}{4\log(k+1)} \rfloor
$,
$$c^*_{1,\ell}(\Gamma_k) \geq  \frac{t}{t+\ell} (k-1) > 0.846k \geq c_{1,\ell}(\Gamma_k).$$
\end{proof}

\noindent
It follows that our lower bound, for $k\approx 1000$ is strictly larger than the classical capacity up to $\ell = 4$, for $k\approx 2000$ up to $\ell=7$, and the upper bound on $\ell$ tends to infinity as $k$ goes to infinity.

\section{Multiple senders} \label{sec:send}

In this section, we focus on a different zero-error communication scenario which can be thought as multiple senders with a single receiver.
Suppose there are $\ell$ Alices, each of whom gets access to a classical channel which connects her to the single Bob.
We are interested in the total amount of messages that the senders, as a group, can transmit perfectly. 
At every stage of the communication only one of the Alices uses her channel to communicate an input.
We assume that inputs of one sender cannot be confused with inputs from another sender. In other words, the receiver knows which one of the senders sent him the message.  
We want to study the maximum cardinality of a message set that the senders are able to perfectly communicate to the receiver when they are allowed to cooperate.
Equivalently, this communication scenario can be depicted as single-sender single-receiver where the sender can choose among $\ell$ channels $\{ \mathcal N_1, \mathcal N_2, \dots \mathcal N_\ell \}$ to use for the communication.
At every round of communication, the receiver learns both the output of the channel as well as which channel has been used.

Suppose that the $k$-th Alice is connected to Bob through a channel $\mathcal N_k$ with confusability graph $G_k$. 
As noticed in~\cite{Alon:2007}, the confusability graph related to $\ell$ cooperating Alices is given by the disjoint union $G_1 + G_2 + \dots + G_\ell$. 
(Intuitively, since inputs from different senders cannot be confused there are not edges between vertices of $G_i$ and $G_j$ if $i \neq j$.)
The maximum size of a message set that can be perfectly transmitted with one use of the channel is $\alpha (G_1 + G_2 + \dots + G_\ell) = \sum_{i=1}^\ell \alpha(G_i)$. 
What happens if we allow multiple channel uses?
Shannon~\cite{Shannon:1956} showed that for every pair of graphs $G$ and $H$, if $c(G) = \log A$ and $c(H) = \log B$ then $c(G+H)\geq \log(A+B)$ and he conjectured that equality holds.
However, Alon~\cite{Alon:1998} showed that there exists a pair of graphs for which strict inequality holds. From an information-theoretical perspective, this example says that if the two senders are allowed to cooperate, the average number of messages they can communicate is strictly more than the sum of their individual possibilities.
This result was extended by Alon and Lubetzky~\cite{Alon:2007} for a larger number of senders. They showed that it is possible to assign a channel to each sender such that only privileged subsets of senders are allowed to communicate with high capacity.

\begin{figure}[t]
\begin{center}
 \includegraphics[width=12.5 cm]{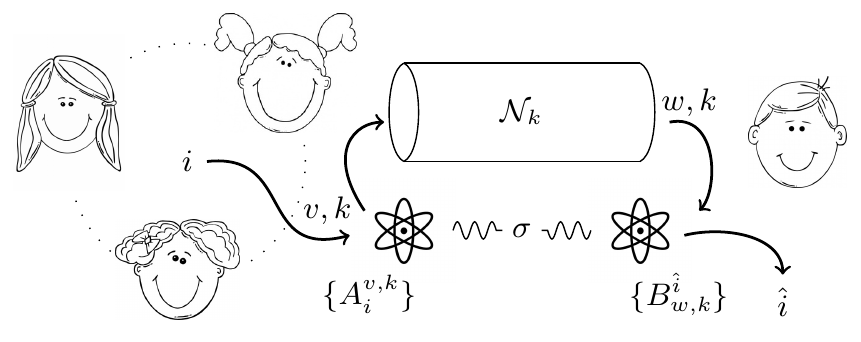}
 \caption{The figure describes an entanglement-assisted multi-sender channel instance, with a single channel use. The Alices can cooperate and the $k$-th Alice has access to a classical channel $\mathcal N_k$.
If the Alices want to communicate the message ${i}$ to Bob, one of them performs a measurement $\{A_{{i}}^{{v},k} \}$ (that depends on ${i}$) on her part of the entangled state. The outcome ${{v},k}$ indicates that the $k$-th Alice should use her channel $\mathcal N_k$ to send input ${v}$.
Bob receives an outcome ${w}$ and, by assumption, he knows that channel $\mathcal N_k$ has been used for the communication. 
He can then perform measurement $\{B_{{w},k}^{\hat{i}}\}$, which depends on ${w}$ and $k$, and outputs ${\hat{i}}$.
The protocol works if ${\hat{i}}$ is equal to ${i}$ with zero probability of error.
}
\label{fig:multi-sender}
\end{center}
\end{figure}

We study this problem in the entanglement-assisted setting focusing on the particular case where all $\ell$ Alices have access to the same channel $\mathcal N$ with confusability graph $G$. 
Since the senders are cooperating and there is no restriction on the amount of shared entanglement in this model of communication, we can assume that only one of the senders actually performs quantum operations on the entangled state.
Hence, we can without loss of generality assume the quantum state to be bipartite and the entanglement-assisted strategy to be used to send messages through a classical channel with confusability graph $G^{+ \ell}$. 
Recall that $G^{+\ell}$ denotes the disjoint union of $\ell$ copies of the graph $G$.
 The protocol is depicted in Figure~\ref{fig:multi-sender}.

\begin{defn}\label{def:qmultiuser}
For a graph $G$ define the \emph{entanglement-assisted multi-sender independence number} with $\ell$ senders as $\alpha_{\ell,1}^*(G) := \alpha^*(G^{+\ell})$.
The \emph{entangled multi-sender Shannon capacity} with $\ell$ senders is $c_{\ell,1}^*(G) := c^*(G^{+\ell})$ which by definition is equal to $ 
\lim_{n\to\infty}\frac{1}{n}\log\alpha^*((G^{+\ell})^{\boxtimes n})$.
\end{defn}
A useful observation is that $\alpha_{\ell,1}^*(G) \ge \ell \cdot \alpha^*(G) $ for every $G$ and $\ell \in \mathbb{N}$.
Indeed, each Alice can individually communicate $\alpha^*(G)$ messages using entanglement and in our model Bob learns automatically which Alice performed the communication. Therefore, the $\ell$ cooperating Alices can communicate at least one among $\ell \cdot \alpha^*(G)$ messages with one use of the channels and entanglement.
%
%
Somewhat surprisingly, we present in Section \ref{sec:viol} an example of a graph for which $\ell$ senders have a better joined strategy. In other words, there is a graph $G$ and $\ell \in \mathbb{N}$ for which $\alpha^*_{\ell,1}(G) > \ell \cdot \alpha^*(G)$.
This does not happen in the classical case where, using analogous notation,
we have $\alpha_{\ell,1}(G) := \alpha(G^{+\ell}) = \ell \cdot \alpha(G)$ and $c_{\ell,1}(G) := c(G^{+\ell}) = c(G) + \log \ell$ for every $G$ and $\ell \in \mathbb{N}$.
The latter equality is also mentioned in~\cite{Shannon:1956} and it follows from the fact that the strong graph product distributes over the disjoint union, \emph{i.e.}, $G \boxtimes (H_1+H_2) = G\boxtimes H_1 + G \boxtimes H_2$ for every $G, H_1,H_2$ (see for example~\cite{ProductGraphs} for a proof), which in particular implies that $(G^{+\ell})^{\boxtimes n} = (G^{\boxtimes n})^{+\ell^n}$.
Using this last equality, we have
\begin{align*}
c(G^{+\ell}) & =  \lim_{n \to \infty} \frac{1}{n}\log\alpha \big((G^{+\ell})^{\boxtimes n}\big) 
= \lim_{n \to \infty} \frac{1}{n}\log \alpha\big((G^{\boxtimes n})^{+\ell^n}\big) \\
& = \lim_{n \to \infty} \frac{1}{n} \log \big(\ell^n \cdot \alpha(G^{\boxtimes n})\big) 
= \lim_{n \to \infty} \frac{1}{n} \log \big(\alpha(G^{\boxtimes n})\big) + \log \ell = c(G) + \log \ell.
\end{align*}

\subsection{Separation between entanglement-assisted and classical multiple senders capacities}

In this section, we show the following: for every graph with a separation between the classical and entanglement-assisted capacity, there is also a separation in the multi-sender setting independently of the number of senders (Theorem \ref{thm:sep_c_l}). The same type of result holds when we restrict to single use of the channel (Lemma \ref{lem:sep_alpha_l}). 
The latter is to be expected since we mentioned above that there is an easy quantum strategy  that allows the $\ell$ Alices to communicate $\ell \cdot \alpha^*(G)$ messages with one use of the channels.

\begin{lem}\label{lem:sep_alpha_l}
For any graph $G$ such that $\alpha^*(G) > \alpha(G)$, we have $\alpha^*_{\ell,1}(G)>\alpha_{\ell,1}(G)$ for every $\ell \in \mathbb{N}$.
\end{lem}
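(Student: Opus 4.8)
The claim is that for any graph $G$ with $\alpha^*(G) > \alpha(G)$, one has $\alpha^*_{\ell,1}(G) > \alpha_{\ell,1}(G)$ for every $\ell \in \mathbb{N}$. By the definitions recalled just above the statement, this is exactly $\alpha^*(G^{+\ell}) > \alpha(G^{+\ell})$, and we already know $\alpha(G^{+\ell}) = \ell\cdot\alpha(G)$. So the plan is to lower-bound $\alpha^*(G^{+\ell})$ by $\ell\cdot\alpha(G) + 1$ by combining the trivial ``one Alice at a time'' strategy with a single entanglement-assisted improvement.

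**Key steps.** First I would invoke the useful observation stated right before the lemma: $\alpha^*_{\ell,1}(G) = \alpha^*(G^{+\ell}) \ge \ell\cdot\alpha^*(G)$, which holds because each Alice can independently run her own entanglement-assisted protocol and Bob learns which channel was used. Second, since $\alpha^*(G) > \alpha(G)$ and both are integers, $\alpha^*(G) \ge \alpha(G) + 1$. Chaining these:
\[
\alpha^*_{\ell,1}(G) \;\ge\; \ell\cdot\alpha^*(G) \;\ge\; \ell\cdot(\alpha(G)+1) \;=\; \ell\cdot\alpha(G) + \ell \;>\; \ell\cdot\alpha(G) \;=\; \alpha(G^{+\ell}) \;=\; \alpha_{\ell,1}(G),
\]
where the strict inequality uses $\ell \ge 1$. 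This closes the proof. I would double-check that $\ell\cdot\alpha^*(G)$ is genuinely achievable as a message-set size for $G^{+\ell}$ (it is: the message set is the disjoint union of $\ell$ copies of an optimal entanglement-assisted message set, with Bob's knowledge of the channel index disambiguating), and that $\alpha(G^{+\ell}) = \ell\cdot\alpha(G)$ exactly (already noted in the text, since there are no edges across copies so an independent set picks an independent set in each copy).

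**Main obstacle.** There is essentially no obstacle here — the statement is a short consequence of integrality plus the additive lower bound. The only thing to be careful about is making the reduction to $G^{+\ell}$ explicit (so that the reader sees $\alpha^*_{\ell,1}(G)=\alpha^*(G^{+\ell})$ and $\alpha_{\ell,1}(G)=\alpha(G^{+\ell})=\ell\alpha(G)$ are used), and noting that the same argument does \emph{not} lift verbatim to the capacity version $c^*_{\ell,1}$ vs.\ $c_{\ell,1}$ — that requires the separate Theorem~\ref{thm:sep_c_l}, since $\frac1n\log$ of the $+1$ gain vanishes asymptotically and one instead has to exploit a separation that survives tensoring.
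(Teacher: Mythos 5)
Your proof is correct and follows essentially the same route as the paper: lower-bound $\alpha^*_{\ell,1}(G)=\alpha^*(G^{+\ell})$ by $\ell\cdot\alpha^*(G)$ via the independent-strategies observation and compare with $\alpha_{\ell,1}(G)=\ell\cdot\alpha(G)$. The only (harmless) difference is that you invoke integrality to get $\alpha^*(G)\ge\alpha(G)+1$, whereas the paper simply multiplies the strict inequality $\alpha^*(G)>\alpha(G)$ by $\ell>0$.
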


\begin{proof}
Each Alice can individually communicate $\alpha^*(G)$ messages using entanglement. Since Bob also learns which Alice has sent him the message, we have an easy strategy that allows to send one among $\ell \cdot \alpha^*(G)$ messages with entanglement. 
Hence, we have 
$$\alpha^*_{\ell,1}(G) \geq  \ell \cdot \alpha^*(G) > \ell \cdot \alpha(G) = \alpha(G^{+\ell}) = \alpha_{\ell,1}(G).$$
\end{proof}


\begin{thm}\label{thm:sep_c_l}
For any graph $G$ such that $c^*(G) > c(G)$, we have $c^*_{\ell,1}(G)>c_{\ell,1}(G)$ for every $\ell \in \mathbb{N}$.
\end{thm}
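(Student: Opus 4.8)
The plan is to reduce the statement to the single-channel separation $c^*(G) > c(G)$ by exploiting the way the disjoint union interacts with the strong product, exactly as in the classical identity $c(G^{+\ell}) = c(G) + \log \ell$ derived just above. First I would recall the two ingredients already available in the excerpt: on the classical side, $c_{\ell,1}(G) = c(G^{+\ell}) = c(G) + \log \ell$, which rests on the distributivity identity $(G^{+\ell})^{\boxtimes n} = (G^{\boxtimes n})^{+\ell^n}$ together with $\alpha(H^{+t}) = t\,\alpha(H)$; and on the quantum side, $c^*_{\ell,1}(G) = c^*(G^{+\ell})$ by definition, together with the observation that $\alpha^*(H^{+t}) \geq t\,\alpha^*(H)$ for every graph $H$ and every $t \in \mathbb{N}$ (each Alice runs her own entanglement-assisted strategy on her own channel, and Bob automatically learns which channel was used).

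Next I would lower bound $c^*(G^{+\ell})$. Applying the distributivity identity and then the observation above with $H = G^{\boxtimes n}$ and $t = \ell^n$ gives
\[
\alpha^*\big((G^{+\ell})^{\boxtimes n}\big) = \alpha^*\big((G^{\boxtimes n})^{+\ell^n}\big) \geq \ell^n \cdot \alpha^*(G^{\boxtimes n}).
\]
Taking $\frac{1}{n}\log(\cdot)$ of both sides and letting $n \to \infty$ — the relevant limits exist by super-multiplicativity of $\alpha^*$ and Fekete's lemma, as already noted — yields $c^*(G^{+\ell}) \geq \log \ell + c^*(G)$.

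Finally I would combine this bound with the hypothesis $c^*(G) > c(G)$:
\[
c^*_{\ell,1}(G) = c^*(G^{+\ell}) \geq \log \ell + c^*(G) > \log \ell + c(G) = c(G^{+\ell}) = c_{\ell,1}(G),
\]
which is precisely the claimed strict separation, valid for every $\ell \in \mathbb{N}$ (the case $\ell = 1$ being just the given separation).

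Morally, the content is that the extra disjoint copies contribute exactly $\log \ell$ to the classical capacity but \emph{at least} $\log \ell$ to the entangled capacity, so the single-channel advantage of entanglement is never diluted by adding senders. There is no genuine obstacle here; the only point requiring a small check is that the capacity inequality survives the limit, but this is immediate since $\frac{1}{n}\log(\ell^n \alpha^*(G^{\boxtimes n})) = \log \ell + \frac{1}{n}\log \alpha^*(G^{\boxtimes n})$ converges term by term to $\log \ell + c^*(G)$.
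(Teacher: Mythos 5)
Your proof is correct and follows essentially the same route as the paper's: both use the distributivity identity $(G^{+\ell})^{\boxtimes n} = (G^{\boxtimes n})^{+\ell^n}$ together with the bound $\alpha^*(H^{+t}) \geq t\,\alpha^*(H)$ to get $c^*_{\ell,1}(G) \geq c^*(G) + \log\ell$, and then compare with $c_{\ell,1}(G) = c(G) + \log\ell$. No gaps.
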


\begin{proof}
Recall that $(G^{+\ell})^{\boxtimes n} = (G^{\boxtimes n})^{+\ell^n}$ and that 
$\alpha^*_{\ell,1}(G) = \alpha^*(G^{+\ell}) \geq \ell \cdot \alpha^*(G)$ for every $\ell \in \mathbb{N}$ and graph $G$. Therefore, we get
\begin{align*}
c_{\ell,1}^*(G) &= \lim_{n \to \infty} \left( \frac{1}{n} \log \left( \alpha^*\big((G^{+\ell})^{\boxtimes n}\big) \right) \right)
= \lim_{n \to \infty} \left( \frac{1}{n} \log \left( \alpha^*\big((G^{\boxtimes n})^{+\ell^n}\big) \right) \right) \\
& \ge \lim_{n \to \infty} \left( \frac{1}{n} \log \left(\ell^n \cdot \alpha^*(G^{\boxtimes n})\right) \right) 
= \lim_{n \to \infty} \left( \frac{1}{n} \log (\alpha^*(G^{\boxtimes n})) + \log \ell \right) \\
& = \lim_{n \to \infty} \left(\frac{1}{n} \log \left(\alpha^*(G^{\boxtimes n})\right) \right) + \log\ell
 = c^*(G) + \log\ell\\
&> c(G) + \log\ell 
= c_{\ell,1}(G).
\end{align*}
%
\end{proof}

\subsection{Improving communication by joint entanglement-assisted strategy}\label{sec:viol}

In this section we show the existence of a graph $G$ and natural number $\ell$ for which $\alpha^*_{\ell,1}(G) > \ell \cdot \alpha^*(G)$. This means that there is an entanglement-assisted strategy for $\ell$ cooperating senders which is strictly better than the sum of their best individual strategies.
More generally, we are able to prove that there exist graphs for which cooperation among the senders allows a better entanglement-assisted strategy for any finite number of channels uses. 
This is a peculiar property of the entanglement-assisted setting since in the classical case $\alpha_{\ell,1}(G)= \ell \cdot \alpha(G)$ and $c_{\ell,1}(G) = c(G) + \log\ell$ always hold.
We currently do not know whether this improvement gained by cooperation in the entanglement-assisted setting extends also to the asymptotic regime.



In order to prove the result, we need to briefly describe a different two-party entanglement-assisted communication scenario. 
We defer to \cite{Briet:2013} for a more detailed explanation.

Let $G$ be a graph. Suppose that Alice receives a vertex $x \in V(G)$ and Bob receives (as side information) a clique $\mathcal{C} \subseteq V(G)$ under the promise that $x \in \mathcal{C}$. Alice can send classical messages to Bob without error.
What is the minimum cardinality of a message set that Alice has to use to communicate to Bob such that he can perfectly learn Alice's input $x$?
It is straightforward to check that in the classical scenario, the minimum cardinality is given by the chromatic number $\chi(G)$. 
Given an optimal coloring of the graph, Alice can simply send the color corresponding to $x$ to Bob. By definition, the color is sufficient to learn $x$, because elements of a clique all have different colors. Conversely, any deterministic strategy yields a coloring of the graph. We can assume an optimal strategy to be deterministic as we are considering the zero-error scenario.  
Similarly, the \emph{entanglement-assisted chromatic number} $\qchrom(G)$ is the minimum cardinality of a message set that Alice has to send to Bob such that he can perfectly learn $x$ when Alice and Bob can share an arbitrary entangled state.
The parameter $\qchrom(G)$ was introduced in~\cite{Briet:2013} to quantify the amount of communication needed in the above mentioned scenario. 
There it is shown that $\vartheta(\overline{G}) \leq \qchrom(G)$ and  $\qchrom(G^{\boxtimes m})\leq \qchrom(G)^m$ for every graph $G$ and $m\in \mathbb{N}$. 

We will need the following technical lemma. Recall that $G \square K_t$ is the Cartesian product graph between $G$ and the complete graph $K_t$.

\begin{lem}\label{lem:qchromqindep}
For any graph $G$, if $t = \qchrom(G)$ then $\qindep(G \square K_t) = |V(G)|.$
\end{lem}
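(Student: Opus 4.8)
The plan is to prove the two inequalities $\alpha^*(G\square K_t)\le|V(G)|$ and $\alpha^*(G\square K_t)\ge|V(G)|$ separately; only the second will use the hypothesis $t=\chi^*(G)$, whereas the first holds for every $t$. For the upper bound, I would note that the ``rows'' $R_u:=\{u\}\times[t]$ ($u\in V(G)$) are pairwise-disjoint cliques of $G\square K_t$ (each isomorphic to $K_t$) that partition $V(G\square K_t)$, hence they form a proper colouring of $\overline{G\square K_t}$ with $|V(G)|$ colours, so $\chi(\overline{G\square K_t})\le|V(G)|$. Plugging $H=G\square K_t$ into the chain $\alpha^*(H)\le\vartheta(H)\le\chi(\overline H)$ --- the first bound being that of~\cite{ZEC-theta,Duan:2013} and the second the Lov\'asz sandwich recalled in Section~\ref{sec:not} --- then gives $\alpha^*(G\square K_t)\le|V(G)|$. (One could equally argue directly: given operators $\{\rho_i^{(v,c)}\},\rho$ witnessing $\alpha^*(G\square K_t)=m$, the operators $P_i^v:=\sum_c\rho_i^{(v,c)}$ satisfy $\sum_vP_i^v=\rho$ and $P_i^vP_j^v=0$ for $i\ne j$, since $(v,c)\simeq(v,c')$ for all $c,c'$; these certify $m\le\alpha^*(\overline K_{|V(G)|})=|V(G)|$.)

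For the lower bound I would set $m=|V(G)|$, index the messages by $V(G)$, and recycle an optimal entanglement-assisted protocol for the $\chi^*$-scenario with $t=\chi^*(G)$ messages: a shared state $\ket\psi\in\mathcal H_A\otimes\mathcal H_B$ together with POVMs $\{A_x^c\}_{c\in[t]}$ for Alice (one per $x\in V(G)$) such that, whenever Alice holds $x$ and Bob a clique $\mathcal C\ni x$, Bob recovers $x$ with certainty. Setting $\sigma_x^c:=\Tr_A\big((A_x^c\otimes I_B)\ketbra{\psi}{\psi}\big)$ and $\sigma:=\Tr_A\ketbra{\psi}{\psi}$, I get $\Tr\sigma=1$ and $\sum_c\sigma_x^c=\sigma$ for every $x$ (the marginal on $B$ does not depend on Alice's measurement).

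The step I expect to be the main obstacle is extracting from perfect correctness the orthogonality
\[
\sigma_x^c\,\sigma_y^c=0\qquad\text{whenever }x\sim y\text{ in }G\text{ and }c\in[t].
\]
For this I would fix the clique $\mathcal C=\{x,y\}$ and Bob's decoding POVM $\{B^{\,z}_{\mathcal C,c}\}_z$ for received symbol $c$; perfect correctness yields $\Tr\big(B^{\,x}_{\mathcal C,c}\sigma_x^c\big)=\Tr(\sigma_x^c)$ and $\Tr\big(B^{\,x}_{\mathcal C,c}\sigma_y^c\big)=0$, and since all operators are positive semidefinite with $0\preceq B^{\,x}_{\mathcal C,c}\preceq I$ this forces $\mathrm{range}(\sigma_x^c)$ into the eigenvalue-$1$ eigenspace of $B^{\,x}_{\mathcal C,c}$ and $\mathrm{range}(\sigma_y^c)$ into its kernel, two orthogonal subspaces. (Alternatively, one may simply invoke the operator-level description of $\chi^*$ from~\cite{Briet:2013} and take the orthogonality as given.)

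Finally I would define, for $i\in V(G)$ and $(v,c)\in V(G\square K_t)$, the operators $\rho_i^{(v,c)}:=\delta_{v,i}\,\sigma_i^c$ with $\rho:=\sigma$, and verify the three conditions of Definition~\ref{def:qindep}: $\Tr\rho=1$ is immediate; $\sum_{(v,c)}\rho_i^{(v,c)}=\sum_c\sigma_i^c=\sigma=\rho$ for every $i$; and for $i\ne j$ with $(v,c)\simeq(w,d)$ in $G\square K_t$ the product $\rho_i^{(v,c)}\rho_j^{(w,d)}=\delta_{v,i}\delta_{w,j}\,\sigma_i^c\sigma_j^d$ is automatically $0$ unless $v=i$ and $w=j$, in which case $v\ne w$, so $(v,c)\simeq(w,d)$ can hold only through $v\sim w$ in $G$ and $c=d$, whence $\sigma_i^c\sigma_j^d=\sigma_i^c\sigma_j^c=0$ by the orthogonality above. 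This witnesses $\alpha^*(G\square K_t)\ge|V(G)|$, and combined with the upper bound proves $\alpha^*(G\square K_t)=|V(G)|$. All of these verifications are routine; the genuine content is the orthogonality extraction of the third paragraph.
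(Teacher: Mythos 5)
Your proof is correct and follows essentially the same route as the paper's: the upper bound via $\qindep \le \vartheta \le |V(G)|$ (the paper passes to the spanning subgraph $\overline{K}_{|V(G)|}\boxtimes K_t$ where you use the clique cover by the rows $\{u\}\times[t]$ --- the same bound), and the lower bound by encoding the message $x$ as the channel input $(x,i)$ where $i$ is the outcome of the $\qchrom(G)$ protocol, with the two clique types of $G\square K_t$ resolved exactly as in the paper. The only difference is that you carry the lower bound out at the level of the operators of Definition~\ref{def:qindep}, including the standard extraction of orthogonality from zero-error correctness, whereas the paper argues operationally; both are fine.
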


\begin{proof}
First, we prove the inequality $\qindep(G \square K_t) \leq |V(G)|$ for any $t \in \mathbb{N}$.
Let $|V(G)| = n$. For every $t \in \mathbb{N}$, we have $\qindep(G \square K_t) \leq \vartheta(G \square K_t) \leq \vartheta(\overline{K}_{n} \boxtimes K_t)=\vartheta(\overline{K}_{n}) \cdot \vartheta(K_t) = n = |V(G)|$. This chain of inequalities uses the fact that $\vartheta$ upper bounds $\alpha^*$, $\overline{K}_{n} \boxtimes K_t$ is a subgraph of $G \square K_t$ and $\vartheta$ is monotone non-decreasing under taking subgraphs, $\vartheta$ is multiplicative under strong graph products and that $\vartheta(K_t) = 1$ and $\vartheta(\overline{K}_n) = n$.

Let $t = \qchrom(G)$ and suppose that Alice and Bob are connected through a one-way classical channel with confusability graph $G \square K_t$.
We present a strategy which allows Alice to communicate $|V(G)|$ messages through this channel with the help of entanglement, thus implying that $\qindep(G \square K_t) \geq |V(G)|$.
Suppose that Alice wants to send message $x \in V(G)$ to Bob. 
Using the strategy for the entanglement-assisted chromatic number $\qchrom(G)$, Alice makes a measurement on her part of the entangled state and gets an outcome $i \in [\qchrom(G)]$.
She sends message $(x,i)$ through the channel. 
To any channel output $w \in W$ we can associate a clique in the confusability graph, describing the set of messages that are confusable to Bob given $w$. There are two types of cliques in $G \square K_t$: either $\{ (z,h): z$  is in a clique $ \mathcal{C}$ of $G\}$ or $\{ (z,h) : h \in H \subseteq  [\qchrom(G)] \}$.
Suppose that from his channel output Bob infers that Alice's input $(x,i)$ is an element of the set $\{(y,i) : y \in \mathcal{C}_x $ where $\mathcal{C}_x$ is a clique of $G$ containing $x\}$.
Since Bob learns $\mathcal{C}_x$, he can use message $i$ to finish the protocol of the entanglement-assisted chromatic number. As mentioned above the protocol allows Bob to recover $x$ with zero probability of error.
For the other case, Bob from his output learns that Alice's input is an element in the set $\{ (x,j) : j \in J \subseteq [\qchrom(G)]$ with $i \in J \}$. Then he can directly deduce that $x$ is the original message used by Alice.
Hence, we have shown an entanglement-assisted protocol that allows to perfectly communicate $|V(G)|$ classical messages through a channel with confusability graph $G \square K_t$.
We conclude that if $t = \qchrom(G)$ then $\alpha^*(G \square K_t) \ge |V(G)|$.
Combining this last inequality with the one derived at the beginning of the proof, we can conclude.
\end{proof}

As a side remark, there is an analogous classical counterpart of the statement in Lemma \ref{lem:qchromqindep}: if $t = \chi(G)$ then $\alpha(G \square K_t) = |V(G)|$. Its proof can be derived using a reasoning similar to the proof above or by a simple graph-theoretic argument.

The following lemma is proven in a more general context in~\cite{Gvozdenovic:2008}.

\begin{lem}\label{lem:sepdisjoint}
Suppose $G$ is a graph such that $\qchrom(G) \cdot \qindep(G) < |V(G)|$ and let $t = \qchrom(G)$. Then, $\qindep(G^{+t}) > t \cdot \qindep(G)$.
\end{lem}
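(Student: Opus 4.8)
The plan is to deduce the statement from Lemma~\ref{lem:qchromqindep} by a short graph-theoretic comparison, rather than by building a protocol for $G^{+t}$ from scratch. The key observation is that, for $t = \qchrom(G)$, the disjoint union $G^{+t}$ is a spanning subgraph of the Cartesian product graph $G \square K_t$: both graphs have vertex set $V(G)\times[t]$, and a pair $ui, vj$ is an edge of $G^{+t}$ exactly when $u\sim v$ in $G$ and $i=j$, while $G\square K_t$ merely has, in addition, all the ``matching'' edges $ui,uj$ with $i\neq j$. So $E(G^{+t})\subseteq E(G\square K_t)$ with the same vertex set.

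Next I would record the elementary monotonicity of $\qindep$ under edge deletion: if $H'$ is a spanning subgraph of $H$ then $\qindep(H')\geq \qindep(H)$. This is immediate from Definition~\ref{def:qindep}: the orthogonality constraints $\rho_i^u\rho_j^v=0$ for the graph $H$ range over all pairs with $u\simeq v$ in $H$, and since the relation ``$u\simeq v$ in $H'$'' is contained in ``$u\simeq v$ in $H$'' (a vertex is always $\simeq$ itself, and every edge of $H'$ is an edge of $H$), any family of operators feasible for $H$ is also feasible for $H'$. Applying this with $H'=G^{+t}$ and $H=G\square K_t$ yields $\qindep(G^{+t})\geq \qindep(G\square K_t)$.

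Finally I would chain the estimates. With $t=\qchrom(G)$, Lemma~\ref{lem:qchromqindep} gives $\qindep(G\square K_t)=|V(G)|$, and the hypothesis $\qchrom(G)\cdot\qindep(G)<|V(G)|$ is exactly $t\cdot\qindep(G)<|V(G)|$, so
\[
\qindep(G^{+t}) \;\geq\; \qindep(G\square K_t) \;=\; |V(G)| \;>\; t\cdot\qindep(G),
\]
which is the claim. There is no real obstacle here: once one notices that $|V(G)|$ is an exact value for $G\square K_t$ (not for $G^{+t}$) and invokes edge-deletion monotonicity to transfer the \emph{lower} bound to $G^{+t}$, Lemma~\ref{lem:qchromqindep} does all the work. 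The only point to be careful about is to compare against $G\square K_t$ rather than trying to prove $\qindep(G^{+t})\geq |V(G)|$ directly.
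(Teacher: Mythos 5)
Your proof is correct and follows essentially the same route as the paper: both observe that $G^{+t}$ is a subgraph of $G \square K_t$, invoke monotonicity of $\qindep$ under taking subgraphs, and chain this with Lemma~\ref{lem:qchromqindep} and the hypothesis to get $\qindep(G^{+t}) \geq \qindep(G\square K_t) = |V(G)| > t\cdot\qindep(G)$. Your explicit justification of the monotonicity step from Definition~\ref{def:qindep} is a nice touch that the paper leaves implicit.
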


\begin{proof}
From Lemma~\ref{lem:qchromqindep}, we get $ t \cdot \qindep(G)  = \qchrom(G) \cdot \qindep(G)< |V(G)|=\qindep(G \square K_t) \leq \qindep(G^{+t})$ where the last inequality follows from the fact that $G^{+t}$ is a subgraph of $G \square K_t$ and that $\qindep$ is monotone non-decreasing under taking subgraphs.
\end{proof}

Recall from Section~\ref{sec:not} that the orthogonality graph $\Omega_k$ has all the vectors $\{ \pm 1\}^k$ as vertex set and two vectors are adjacent if orthogonal. 
From~\cite{Roberson:2012}, we know that $\vartheta(\Omega_k) = 2^k/k$ and $\vartheta(\overline{\Omega_k}) = k$  if $k$ is a multiple of four. 
Consider the orthogonal representation $f:V(\Omega_k)\to \mathbb{R}^k$ with $f(v) = v/\sqrt{k}$ that maps vertices of $\Omega_k$ to the unit sphere and adjacent vertices to orthogonal vectors.
Since $\qchrom$ is upper bounded by the minimum dimension of an orthogonal representation in which all the entries of the vectors have equal moduli~\cite{Briet:2013}, we have that $k = \vartheta(\overline{\Omega_k}) \leq \qchrom(\Omega_k) \leq k$, and thus $\qchrom(\Omega_k) = k$, for every $k$ multiple of four. 

We show the existence of a graph $G$ and $\ell \in \mathbb{N}$ for which $\qindep_{\ell,1}(G) > \ell \cdot \qindep(G)$. Hence, $\ell$ cooperating senders can communicate strictly more (with one use of a channel and entanglement) than the sum of what they can communicate individually.

\begin{thm}\label{thm:disjointalpha}
Let $\Omega_k$ be the orthogonality graph with $k$ a multiple of four but not a power of two. Then $ \alpha^*_{k,1}(\Omega_k) = \alpha^*(\Omega_k^{+k}) >k \cdot \alpha^*(\Omega_k)$.
\end{thm}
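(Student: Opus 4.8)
The plan is to apply Lemma~\ref{lem:sepdisjoint} to the graph $G = \Omega_k$. To do so, I need to verify the hypothesis $\qchrom(\Omega_k) \cdot \qindep(\Omega_k) < |V(\Omega_k)|$ and set $t = \qchrom(\Omega_k)$; the conclusion $\qindep(\Omega_k^{+t}) > t \cdot \qindep(\Omega_k)$ then says exactly that $\alpha^*_{t,1}(\Omega_k) > t \cdot \alpha^*(\Omega_k)$, and choosing $\ell = k = t$ gives the statement once we know $t = k$.

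First I would recall from the discussion just before the theorem that $\qchrom(\Omega_k) = k$ whenever $k$ is a multiple of four, so indeed $t = k$ and the quantity $\alpha^*_{k,1}(\Omega_k)$ in the statement is the relevant one. Next I would bound $\qindep(\Omega_k)$ from above using the Lov\'asz theta number: since $\alpha^*(G) \leq \vartheta(G)$ for all graphs and $\vartheta(\Omega_k) = 2^k/k$ by~\cite{Roberson:2012} (valid for $k$ a multiple of four), we get $\qindep(\Omega_k) \leq \vartheta(\Omega_k) = 2^k/k$. Combining these two facts,
\[
\qchrom(\Omega_k) \cdot \qindep(\Omega_k) \leq k \cdot \frac{2^k}{k} = 2^k = |V(\Omega_k)|.
\]
This gives only $\leq$, not the strict inequality $<$ required by Lemma~\ref{lem:sepdisjoint}, which is precisely where the hypothesis ``$k$ not a power of two'' must enter. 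The point is that $\vartheta(\Omega_k) = 2^k/k$ is an integer exactly when $k$ is a power of two; when $k$ is a multiple of four but not a power of two, $2^k/k$ is not an integer, so $\alpha^*(\Omega_k) \leq \lfloor \vartheta(\Omega_k) \rfloor = \lfloor 2^k/k \rfloor \leq 2^k/k - 1/k < 2^k/k$ (using $\alpha^*(G) \leq \lfloor\vartheta(G)\rfloor$ from~\cite{ZEC-theta,Duan:2013}). Hence $\qchrom(\Omega_k)\cdot\qindep(\Omega_k) \leq k(2^k/k - 1/k) = 2^k - 1 < 2^k = |V(\Omega_k)|$, and the strict hypothesis of Lemma~\ref{lem:sepdisjoint} is met.

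I expect the main (and essentially only) obstacle to be getting this strictness: one must remember to use the integrality bound $\alpha^*(G) \leq \lfloor \vartheta(G)\rfloor$ rather than just $\alpha^*(G) \leq \vartheta(G)$, and to observe that $k$ being a multiple of four but not a power of two forces $2^k/k$ to be a non-integer so that the floor strictly decreases it. Once the hypothesis is verified, Lemma~\ref{lem:sepdisjoint} with $t = \qchrom(\Omega_k) = k$ immediately yields $\alpha^*(\Omega_k^{+k}) = \qindep(\Omega_k^{+k}) > k \cdot \qindep(\Omega_k) = k \cdot \alpha^*(\Omega_k)$, which together with the definitional identity $\alpha^*_{k,1}(\Omega_k) = \alpha^*(\Omega_k^{+k})$ completes the proof.
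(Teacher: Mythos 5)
Your proposal is correct and follows essentially the same route as the paper: verify $\qchrom(\Omega_k)=k$, use $\qindep(\Omega_k)\leq\lfloor\vartheta(\Omega_k)\rfloor=\lfloor 2^k/k\rfloor<2^k/k$ to get the strict inequality $\qchrom(\Omega_k)\cdot\qindep(\Omega_k)<|V(\Omega_k)|$, and invoke Lemma~\ref{lem:sepdisjoint}. Your explicit observation that the non-integrality of $2^k/k$ (forced by $k$ not being a power of two) is what makes the floor strictly decrease the value is exactly the point the paper's terser argument relies on.
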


\begin{proof}
From the reasoning above we know that $\qchrom(\Omega_k) = k$.
Moreover, $\qindep(\Omega_k) \leq \floor{\vartheta(\Omega_k)} = \floor{2^k/k} < 2^k/k = \vartheta (\Omega_k)$.
Using a similar argument as in~\cite{Roberson:2012}, we get that  $\qchrom(\Omega_k) \cdot \qindep(\Omega_k) < |V(\Omega_k)|$ since 
$$\qchrom(\Omega_k) \cdot \qindep(\Omega_k) \leq k \cdot \floor{2^k/k} < k \cdot 2^k/k  = 2^k = |V(\Omega_k)|.$$
Using Lemma~\ref{lem:sepdisjoint} we conclude that $\alpha^*_{k,1}(\Omega_k) = \alpha^*(\Omega_k^{+k})> k \cdot \alpha^*(\Omega_k)$.
\end{proof}

%
%

With a similar reasoning, we can prove that for every finite number of uses of the channel, cooperation among the players improves the entanglement-assisted communication.
Let $\alpha^*_{\ell,1}(G,n) := \alpha^*((G^{+\ell})^{\boxtimes n})$ be the maximum cardinality of a message set that $\ell$ Alices can communicate perfectly to Bob with $n$ uses of the channel and entanglement. 

In the next lemma, we show that there exist a graph $G$ and $\ell \in \mathbb{N}$ such that $\alpha^*_{\ell,1}(G,n) > \ell^n \cdot \alpha^*(G^{\boxtimes n})$ for every $n \in \mathbb{N}$.
This is equivalent to saying that there exists a channel and a certain number of senders for which cooperation among the senders strictly improves the communication of $n$ channel uses for every $n \in \mathbb{N}$. 

\begin{thm}\label{thm:disjointalpha_n}
Let $\Omega_k$ be the orthogonality graph with $k$ a multiple of four but not a power of two. Then, $\alpha_{k,1}^*(\Omega_k,n) > k^n \cdot \alpha^*(\Omega_k,n)$ for every $n \in \mathbb{N}$.
\end{thm}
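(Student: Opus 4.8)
The plan is to mirror the proof of Theorem~\ref{thm:disjointalpha}, but to apply Lemma~\ref{lem:sepdisjoint} to the graph $G := \Omega_k^{\boxtimes n}$ instead of to $\Omega_k$ itself. Using the distributivity identity $(\Omega_k^{+k})^{\boxtimes n} = (\Omega_k^{\boxtimes n})^{+k^n}$ recorded earlier, we have $\alpha^*_{k,1}(\Omega_k,n) = \alpha^*\bigl((\Omega_k^{\boxtimes n})^{+k^n}\bigr) = \alpha^*(G^{+k^n})$ and $\alpha^*(\Omega_k,n) = \alpha^*(G)$, so the theorem is exactly the assertion $\alpha^*(G^{+k^n}) > k^n \cdot \alpha^*(G)$. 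To obtain this from Lemma~\ref{lem:sepdisjoint} I need two facts about $G$: that $\qchrom(G) = k^n$ (so that the lemma's parameter $t$ equals $k^n$), and that $\qchrom(G)\cdot\alpha^*(G) < |V(G)|$ (the lemma's hypothesis).

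For the first fact, the upper bound $\qchrom(\Omega_k^{\boxtimes n}) \le k^n$ follows from the submultiplicativity $\qchrom(H^{\boxtimes m}) \le \qchrom(H)^m$ together with $\qchrom(\Omega_k) = k$, both quoted from \cite{Briet:2013} (the latter via the equal-modulus orthogonal representation $v \mapsto v/\sqrt{k}$ and $\vartheta(\overline{\Omega_k}) = k$). The matching lower bound comes from $\vartheta(\overline{H}) \le \qchrom(H)$ and the multiplicativity $\vartheta(\overline{H_1 \boxtimes H_2}) = \vartheta(\overline{H_1})\,\vartheta(\overline{H_2})$, which give $\qchrom(\Omega_k^{\boxtimes n}) \ge \vartheta(\overline{\Omega_k^{\boxtimes n}}) = \vartheta(\overline{\Omega_k})^n = k^n$. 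Hence $\qchrom(G) = k^n$.

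For the second fact, I would bound $\alpha^*(G) \le \floor{\vartheta(G)} = \floor{\vartheta(\Omega_k)^n} = \floor{2^{kn}/k^n}$, using $\alpha^* \le \floor{\vartheta}$, multiplicativity of $\vartheta$ under strong products, and $\vartheta(\Omega_k) = 2^k/k$; note $|V(G)| = 2^{kn}$. Here the hypothesis that $k$ is a multiple of four but \emph{not} a power of two is exactly what makes the inequality strict: $k$ then has an odd prime factor, so $k^n$ does not divide $2^{kn}$, the rational $2^{kn}/k^n$ is not an integer, and therefore $k^n\floor{2^{kn}/k^n} < 2^{kn}$. Combining, $\qchrom(G)\cdot\alpha^*(G) \le k^n\floor{2^{kn}/k^n} < 2^{kn} = |V(G)|$, and Lemma~\ref{lem:sepdisjoint} with $t = \qchrom(G) = k^n$ yields $\alpha^*(G^{+k^n}) > k^n\cdot\alpha^*(G)$, which is the claim.

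The point that needs the most attention is that Lemma~\ref{lem:sepdisjoint} is phrased with the specific parameter $t = \qchrom(G)$, so the clean reduction above only goes through because $\qchrom(\Omega_k^{\boxtimes n})$ equals $k^n$ exactly, not merely $\le k^n$; proving that lower bound on $\qchrom$ is the one genuinely new ingredient compared with the $n = 1$ case. If one wished to sidestep computing $\qchrom(G)$ on the nose, an alternative is to put $t := \qchrom(\Omega_k^{\boxtimes n}) \le k^n$, note that $(\Omega_k^{\boxtimes n})^{+t}$ is an induced subgraph of $(\Omega_k^{\boxtimes n})^{+k^n}$, and use that the proof of Lemma~\ref{lem:sepdisjoint} in fact delivers the stronger bound $\alpha^*\bigl((\Omega_k^{\boxtimes n})^{+t}\bigr) \ge |V(\Omega_k^{\boxtimes n})| = 2^{kn}$ via Lemma~\ref{lem:qchromqindep}; monotonicity of $\alpha^*$ under subgraphs then still gives $\alpha^*_{k,1}(\Omega_k,n) \ge 2^{kn} > k^n\cdot\alpha^*(\Omega_k,n)$.
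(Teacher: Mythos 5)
Your argument is correct and follows essentially the same route as the paper's own proof: establish $\qchrom(\Omega_k^{\boxtimes n}) = k^n$ by sandwiching between $\vartheta(\overline{\Omega_k^{\boxtimes n}}) = k^n$ and the submultiplicative upper bound $\qchrom(\Omega_k)^n = k^n$, verify the hypothesis of Lemma~\ref{lem:sepdisjoint} via $\alpha^*(\Omega_k^{\boxtimes n}) \leq \floor{(2^k/k)^n} < (2^k/k)^n$, and apply the lemma to $G = \Omega_k^{\boxtimes n}$ with $t = k^n$. Your explicit justification of the strict inequality (that $k$ having an odd prime factor forces $2^{kn}/k^n \notin \mathbb{Z}$) is a welcome clarification of a step the paper leaves implicit.
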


\begin{proof}
Using the properties of Lov\'asz theta number presented in Section \ref{sec:not}, we have that $\vartheta (\Omega_k^{\boxtimes n}) = \vartheta (\Omega_k)^n = \left(\frac{2^k}{k}\right)^n$ and $\vartheta (\overline{\Omega_k^{\boxtimes n}}) = \vartheta (\overline{\Omega_k})^n = k^n$ for every $n \in \mathbb{N}$.
Then by sub-multiplicativity of $\qchrom(G)$~\cite{Briet:2013} and since $\chi^*(\Omega_k) = k$, we have $k^n = \vartheta (\overline{\Omega_k^{\boxtimes n}}) \leq \chi^*(\Omega_k^{\boxtimes n}) \leq \chi^*(\Omega_k)^n = k^n$.
This implies that for any integer $n$, 
$$\alpha^*(\Omega_k^{\boxtimes n}) \leq \floor{\vartheta (\Omega_k^{\boxtimes n})} = \Big\lfloor\left(\frac{2^k}{k}\right)^n\Big\rfloor < \left(\frac{2^k}{k}\right)^n = \frac{|V(\Omega_k^{\boxtimes n})|}{\chi^*(\Omega_k^{\boxtimes n})}.$$ 
Applying Lemma~\ref{lem:sepdisjoint}, we then have that $\alpha^*\big((\Omega_k^{\boxtimes n})^{+k^n}\big) > k^n \cdot \alpha^*(\Omega_k^{\boxtimes n})$ for every $n \in \mathbb{N}$.
We can conclude that 
$$\alpha_{k,1}^*(\Omega_k,n) = \alpha^*\big((\Omega_k^{\boxtimes n})^{+k^n}\big) >k^n \cdot \alpha^*(\Omega_k^{\boxtimes n}) = k^n \cdot \alpha^*(\Omega_k,n).$$
\end{proof}
%



\section{Conclusions}\label{sec:con}


We have studied the effects of entanglement in two multi-party channel-coding problems. 
For the compound channel setting (with one sender and multiple receivers) we have shown that entanglement can only help if the number of receivers is below a certain threshold which depends only on the channel. If there are more receivers, entanglement does not help for zero-error communication for any finite number of channel uses.
In the second situation, where there are multiple senders and one receiver, we have shown that there are channels for which entanglement always improves the communication.

In both these situations, we assume that the multiple parties have access to identical channels. 
The first natural question to ask is about the effect of entanglement when the channels are different. 
Is it possible to extend to the entanglement-assisted setting the results obtained by \cite{Gargano:1994} (for multiple receivers) and \cite{Alon:2007} (for multiple cooperating senders)?
These problems seem difficult as we only have a limited understanding of the behavior of the parameters $\alpha^*$ and $c^*$.

Two more specific questions are related to Theorem \ref{thm:qmonogamy} in Section \ref{sec:rec-neg}. 
Can we find a better bound on the number of receivers for which the entanglement-assisted capacity is equal to the classical capacity in the compound channel scenario?
The bound we have obtained is the edge-clique cover number $\theta'_e$, but this bound is derived using non-signaling distributions and therefore in a more general context than the entanglement-assisted setting.
Furthermore, it would be interesting to know whether an asymptotic version of Theorem \ref{thm:qmonogamy} holds. Does the parameter $c_{1,\ell}^*(G)$ tend to the classical capacity $c(G)$ when the number of receivers $\ell$ goes to infinity?

In general, finding better bounds for the parameter $c^*$ is interesting. 
Can we find a new general protocol (like the one based on teleportation in Section \ref{sec:tele}) that gives a better (or incomparable) lower bound on $c^*$?
In a similar spirit, can we find an upper bound on $c^*$ which is different from the Lov\'asz theta number? 
Note that such an upper bound is known for the classical parameter $c$ and it was found by Haemers \cite{Haemers:1978}.
A related question is to find a graph $G$ such that $c^*(G) < \log \vartheta(G)$.
An approach to this latter problem is to find a pair of graphs, $G$ and $H$ (not necessarily distinct), for which $c^*(G+H) > \log (A + B)$  where $c^*(G) = \log A$ and $c^*(H) = \log B$. Such a result would be in the same spirit as our findings on the parameter $\alpha^*$ described in Section \ref{sec:viol}.


\paragraph{Acknowledgments}

The authors thank Jop Bri\"et, Harry Buhrman, Monique Laurent, Laura Man\v{c}inska, Fernando de Melo, Andreas Winter and Ronald de Wolf for useful discussions. We also thank the anonymous referees for their excellent comments and suggestions to improve this article, in particular for the operational proof of Lemma~\ref{lem:qchromqindep}. 
T.P. was partially funded by the European project SIQS.  G.S. was supported by the European Commission (STREP RAQUEL). Part of this work was done when G.S. was a PhD student at CWI Amsterdam, supported by Ronald de Wolf's VIDI grant from NWO. C.S. was supported by an NWO VENI grant.


\begin{thebibliography}{50}

\bibitem[AL06]{Alon:2006}
N.~Alon and E.~Lubetzky.
\newblock The Shannon capacity of a graph and the independence numbers of its
  powers.
\newblock {\em IEEE Transactions on Information Theory}, 52(5):2172--2176,
  2006.

\bibitem[AL07]{Alon:2007}
N.~Alon and E.~Lubetzky.
\newblock Privileged users in zero-error transmission over a noisy channel.
\newblock {\em Combinatorica}, 27(6):737--743, 2007.

\bibitem[Alo98]{Alon:1998}
N.~Alon.
\newblock The Shannon capacity of a union.
\newblock {\em Combinatorica}, 18:301--310, 1998.

\bibitem[BBC{\etalchar{+}}93]{Bennett:1993}
C.~H. Bennett, G.~Brassard, C.~Cr\'{e}peau, R.~Jozsa, A.~Peres, and W.~K.
  Wootters.
\newblock Teleporting an unknown quantum state via dual classical and
  \uppercase{E}instein-\uppercase{P}odolsky-\uppercase{R}osen channels.
\newblock {\em Physical Review Letters}, 70:1895--1899, 1993.

\bibitem[BDS{\etalchar{+}}01]{Bennett:2001}
C.~H.~Bennett, D.~P.~DiVincenzo, P.~W.~Shor, J.~A.~Smolin, B.~M.~Terhal, and W.~K.~Wootters.
\newblock {Remote State Preparation}.
\newblock {\em Physical Review Letters}, 87:077902, 2001.

\bibitem[BSST02]{Bennett:2002}
C.~H. Bennett, P.~W.~Shor,  J.~A.~Smolin, and A.~V. Thapliyal.
\newblock Entanglement-assisted capacity of a quantum channel and the reverse Shannon theorem.
\newblock {\em IEEE Transactions on Information Theory}, 48:2637--2655, 2002.


\bibitem[BBG12]{Briet:2012}
J.~Bri\"{e}t, H.~Buhrman, and D.~Gijswijt.
\newblock {Violating the Shannon capacity of metric graphs with entanglement}.
\newblock {\em Proceedings of the National Academy of Sciences}, 2012.

\bibitem[BBL{\etalchar{+}}13]{Briet:2013}
J.~Bri\"et, H.~Buhrman, M.~Laurent, T.~Piovesan, and G.~Scarpa.
\newblock {Zero-error source-channel coding with entanglement}.
\newblock{ In {\em The Seventh European Conference on Combinatorics, Graph Theory and Applications},
pages 157-162, 2013.} 
\newblock  Full version available at arXiv:1308:4283.


\bibitem[Bei10]{ZEC-theta}
S.~Beigi.
\newblock Entanglement-assisted zero-error capacity is upper-bounded by the
  {L}ov\'{a}sz theta function.
\newblock {\em Physical Review A}, 82:10303--10306, 2010.

\bibitem[CKS90]{Cohen:1990}
G.~Cohen, J.~K\"orner, and G.~Symonyi.
\newblock Zero-error capacities and very different sequences.
\newblock {\em Sequences: combinatorics,
  compression, security and transmission}, 144--155, Springer-Verlag,
  1990.

\bibitem[CLMW10]{CLMW09}
T.~S. Cubitt, D.~Leung, W.~Matthews, and A.~Winter.
\newblock Improving zero-error classical communication with entanglement.
\newblock {\em Physical Review Letters}, 104:230503--230506, 2010.

\bibitem[CLMW11]{CLMW11}
T.~S. Cubitt, D.~Leung, W.~Matthews, and A.~Winter.
\newblock {Zero-error channel capacity and simulation assisted by non-local correlations.}
\newblock {\em IEEE Transactions on Information Theory}, 57(8):5509 --5523, 2011.


\bibitem[DSW13]{Duan:2013}
R.~Duan, S.~Severini, and A.~Winter.
\newblock Zero-error communication via quantum channels, noncommutative graphs,
  and a quantum {Lov\'asz} number.
\newblock {\em IEEE Transactions on Information Theory}, 59(2):1164 --1174, 2013.

\bibitem[GKV94]{Gargano:1994}
L.~Gargano, J.~K{\"o}rner, and U.~Vaccaro.
\newblock Capacities: From information theory to extremal set theory.
\newblock {\em Journal of Combinatorial Theory, Series A}, 68(2):296--316, 1994.

\bibitem[GL08]{Gvozdenovic:2008}
N.~Gvozdenovic and M.~Laurent.
\newblock The operator {$\Psi$} for the chromatic number of a graph.
\newblock {\em SIAM Journal on Optimization}, 19(2):572--591, July 2008.

\bibitem[Hae78]{Haemers:1978}
W. H. Haemers.
\newblock {An upper bound for the Shannon capacity of a graph}.
\newblock {\em Colloquia Mathematical Societatis J\'anos Bolyai}, 25:267--272, 1978.

\bibitem[HIK11]{ProductGraphs}
R.~Hammack, W.~Imrich, and S.~Klav\u{z}ar.
\newblock {\em Handbook of product graphs}.
\newblock Discrete Mathematics and its Applications. CRC Press,
  Boca Raton, FL, second edition, 2011.

\bibitem[KD93]{Knuth:1993}
D. E. Knuth.
\newblock {The sandwich theorem}.
\newblock {\em The Electronic Journal of Combinatorics}, 1:1-48, 1994.

\bibitem[KO98]{Korner}
J.~Korner and A.~Orlitsky.
\newblock Zero-error information theory.
\newblock {\em IEEE Transactions on Information Theory}, 44(6):2207--2229, 1998.

\bibitem[LMM{\etalchar{+}}12]{LMMOR}
D.~Leung, L.~Man\v{c}inska, W.~Matthews, M.~Ozols, and A.~Roy.
\newblock Entanglement can increase asymptotic rates of zero-error classical
  communication over classical channels.
\newblock {\em Communications in Mathematical Physics}, 311:97--111, 2012.

\bibitem[Lov79]{Lovasz}
L.~Lov{\'a}sz.
\newblock On the {S}hannon capacity of a graph.
\newblock {\em IEEE Transactions on Information Theory}, 25(1):1--7, 1979.

\bibitem[MAG06]{MAG06}
Ll. Masanes, A.~Acin, and N.~Gisin.
\newblock General properties of nonsignaling theories.
\newblock {\em Physical Review A}, 73:012112, Jan 2006.

\bibitem[RM12]{Roberson:2012}
D. E. {Roberson} and L.~{Man\v{c}inska}.
\newblock {Graph Homomorphisms for Quantum Players}.
\newblock { arXiv:1212.1724}, 2012.

\bibitem[Sha56]{Shannon:1956}
C.~E. Shannon.
\newblock The zero error capacity of a noisy channel.
\newblock {\em IRE Transactions on Information Theory}, IT-2(3):8--19,
  1956.

\bibitem[Sin09]{Sinaimeri}
B.~Sinaimeri.
\newblock {\em Structures of diversity}.
\newblock PhD thesis, Sapienza University, Rome, 2009.


\end{thebibliography}

\newcommand{\etalchar}[1]{$^{#1}$}

\end{document}